\documentclass{llncs}
\usepackage{amsmath,amssymb,mathrsfs}
\usepackage{color}
\usepackage{graphicx,subfigure}
\usepackage{indentfirst}
\usepackage{cite}
\usepackage{algorithm}
\usepackage{algorithmic}
\usepackage{multirow}
\usepackage{xcolor}

\newcommand{\be}{\begin{equation}}
\newcommand{\ee}{\end{equation}}

\renewcommand{\appendix}{\par
\if@chapter@pp
\setcounter{chapter}{0}%
\setcounter{section}{0}%
\gdef\@chapapp{\appendixname}%
\gdef\thechapter{\@Alph\c@chapter}
\else
\setcounter{section}{0}%
\setcounter{subsection}{0}%
\gdef\thesection{Appendix \Alph{section}}
\fi
}

\spnewtheorem{subcase}{Subcase}[case]{\it}{}
\numberwithin{subcase}{case}
\begin{document}
\title{Combinations of Some Shop Scheduling Problems and the Shortest Path Problem: Complexity and Approximation Algorithms}

\author{Kameng Nip, Zhenbo Wang\thanks{Corresponding author. Department of Mathematical Sciences, Tsinghua University, Beijing, 100084, China. Email: zwang@math.tsinghua.edu.cn} \and Wenxun Xing}

\institute{Department of Mathematical Sciences, Tsinghua University, Beijing, 100084, China}
\maketitle
\begin{abstract}
We consider several combinatorial optimization problems which combine the classic shop scheduling problems, namely open shop scheduling or job shop scheduling, and the shortest path problem. The objective of the obtained problem is to select a subset of jobs that forms a feasible solution of the shortest path problem, and to execute the selected jobs on the open (or job) shop machines to minimize the makespan. We show that these problems are $\mathrm{NP}$-hard even if the number of machines is two, and cannot be approximated within a factor less than 2 if the number of machines is an input unless $\mathrm{P}=\mathrm{NP}$. We present several approximation algorithms for these combination problems.

\noindent \textbf{Keywords.} approximation algorithm; combination of optimization problems; job shop; open shop; scheduling; shortest path.
\end{abstract}

\section{Introduction}
Combinatorial optimization has been developed for more than fifty years, and it involves many active subfields, e.g. network flows, scheduling, bin packing, etc. Usually these subfields were arisen by different applications or theoretical interests, and separately developed. The advancement of science and technology makes it possible to integrate manufacturing, service and management. At the same time, the decision-makers always need to deal with problems incurred by more than one combinatorial optimization problems.

Wang and Cui \cite{WC12} introduced a problem combining two classic combinatorial optimization problems, namely parallel machine scheduling and the vertex cover problem. The combination problem is to select a subset of jobs that forms a vertex cover and to schedule it on some identical parallel machines such that the makespan is minimized. They proposed a $(3 - \frac{2}{m+1})$-approximation algorithm. This work also inspired the study of the combination of different combinatorial optimization problems.

Flow shop, open shop and job shop are three basic models of multi-stage scheduling problems. Nip and Wang \cite{NW13} studied a combination problem that combines two-machine flow shop scheduling and the shortest path problem. They argued that this problem is $\mathrm{NP}$-hard, and proposed two approximation algorithms with worst-case ratio $2$ and $\frac{3}{2}$ respectively. Recently they extended the results to the case that the number of flow shop machines is arbitrary \cite{NW13_2}. One motivation of this problem is manufacturing rail racks. We need to build a railway between two cities. How should we choose a feasible path in a map, such that the corresponding rail tracks (jobs) can be manufactured on some flow shop machine as early as possible? It is convincing to change the flow shop environment into the other two well-known shop environments, i.e. open shop and job shop, as they also apply widely in the real world. This is the core motivation for this current work. In this paper, we mainly study two types of problems: the combination of open shop scheduling and the shortest path problem, and the combination of job shop scheduling and the shortest path problem.

The contributions of this paper are described as follows:
(1) we argue that these combination problems are $\mathrm{NP}$-hard even if the number of machines is two, and if the number of machines is an input, these problems cannot be approximated within a factor of $2$ unless $\mathrm{P}=\mathrm{NP}$; (2) we present several approximation algorithms with performance ratio summarized as follows in which $\epsilon > 0$ is any constant and $\mu$ is the maximum number of operations per job in job shop scheduling.
\begin{table}[h]\label{tab1}
\begin{center}
\begin{tabular}{|c|c|c|}
                  \hline
                   Number of Machines & Open Shop & Job Shop\\
                  \hline
                  2 & FPTAS & $\frac{3}{2}+\epsilon$* \\
                  $m$ (fixed) & PTAS** & $O\left(\frac{\log^2(m\mu)}{\log{\log(m\mu)}}\right)$\\
                  $m$ (input) &  $m$ &  $m$\\
                  \hline
\end{tabular}
\caption{Performance of our algorithms}
\end{center}
* Assume that each job has at most $2$ operations.\\
** A $(2+\epsilon)$-approximation algorithm is also proposed.
\end{table}

The rest of the paper is organized as follows. In Section \ref{sec_pre}, we give a formal definition of the combination problems stated above, and briefly review some related problems and algorithms will be used subsequently. In Section \ref{sec_com}, we study the computational complexity of these combination problems and give an inapproximability result when the number of machines is an input. Section \ref{sec_approx} provides several approximation algorithms for these problems. Some concluding remarks are provided in Section \ref{sec_end}.

\section{Preliminaries}\label{sec_pre}
\subsection{Problem Description}\label{sec_pd}
We first define the combination problems considered in this paper.

\begin{definition}
Given a directed graph $G = (V, A)$ with two distinguished vertices $s, t \in V$, and $m$ machines. Each arc $a_j \in A$ corresponds with a job $J_j\in J$. Each job $J_j$ has several operations $O_{1j}$, $O_{2j}$, $\cdots$, $O_{sj}$ (in the open shop, $s = m$ and the order is arbitrary; in the job shop, the order is given as a chain). The processing times for $J_j$ on machine $M_i$ is $p_{ij}$. The $Om|\mathrm{shortest}~\mathrm{path}|C_{max}$ ($Jm|\mathrm{shortest}~\mathrm{path}|C_{max}$) problem is to find a $s-t$ directed path $P$ of $G$, and to schedule the jobs of $J_P$ on the open (job) shop machines to yield the minimum makespan over all $P$, where $J_P$ denotes the set of jobs corresponding to the arcs in $P$.\label{d_comb}
\end{definition}

We denote the number of jobs (arcs) as $n$, i.e. $|A| = |J| = n$. Denote by $\{M_1,M_2,\cdots, M_m\}$ the $m$ machines, and let $\mu_{ij}$ be the times of $J_j$ needed to be processed on $M_i$. Notice $\mu_{ij} = 1$ in the open shop.

It is not difficult to see that the open (job) shop scheduling problem and the classic shortest path problem are special cases of our problems, and hence we say the considered problems are the combinations of the scheduling problems and the shortest path problem. We will show that the combination problems appear different aspects in computational complexity and algorithm design from the shop scheduling problems or the shortest path problem.

In this paper, we will use the results of some optimization problems that
have a similar structure with the classic shortest path problem. We introduce the generalized shortest path problem defined in \cite{NW13}, and extend it to $K$ weights.

\begin{definition}
Given a directed graph $G = (V, A, w^1, \cdots, w^K)$ and two distinguished vertices $s, t\in V$ with $|A| = n$. Each arc $a_j\in A, j = 1,\cdots,n$ is associated with $K$ weights $w^1_j, \cdots, w^K_j$, and we define vector $w^k = (w^k_1, w^k_2, \cdots, w^k_n)$ for $k=1, 2, \cdots, K$. The goal of our shortest path problem $SP(G, s, t, f)$ is to find a $s-t$ directed path $P$ that minimizes $f(w^1, w^2, \cdots, w^K; x)$, in which $f$ is a given objective function and $x \in \{0, 1\}^n$ contains the decision variables such that $x_j = 1$ if and only if $a_j\in P$.
\label{d_sp}
\end{definition}

For simplicity of notation, we denote $SP$ instead of $SP(G, s, t, f)$ in the rest of the paper. Notice $SP$ is a generalization of various shortest path problems. For example, if we set $K = 1$ and $f(w^1, x) = w^1\cdot x$, where $\cdot$ is the dot product, it is the classic shortest path problem. If $f(w^1\cdot x, w^2\cdot x, \cdots, w^K\cdot x, x) = \max\{w^1\cdot x, w^2\cdot x, \cdots, w^K\cdot x\}$, it is the min-max shortest path problem \cite{ABV06}.

\subsection{Review of Open Shop and Job Shop Scheduling}\label{sec_f2}
Gonzalez and Sahni \cite{Gonzalez1976} first gave a linear time optimal algorithm for $O2||C_{max}$. They also proved that $Om||C_{max}$ is $\mathrm{NP}$-hard for $m\geq 3$, however whether it is strongly $\mathrm{NP}$-hard is still an outstanding open problem. A feasible shop schedule is called dense when any machine is idle if and only if there is no job that could be processed on it. R{\'a}csm{\'a}ny (see B{\' a}r{\' a}ny and Fiala \cite{bt82}) observed that for any dense schedule, the makespan is at most twice of the optimal solution, that leads to a greedy algorithm. Sevastianov and Woeginger \cite{Sevastianov1998} presented a PTAS for fixed $m$, which is obtained by dividing jobs into large jobs and small jobs. Their algorithm first  optimally schedules the large jobs, then fills the operations of the small jobs into the `gaps'. In this paper, we will use these algorithms, and refer to the GS algorithm, R{\'a}csm{\'a}ny algorithm and the SW algorithm respectively. We present the main results of these algorithms as follows.

\begin{theorem}[\cite{Gonzalez1976}]
The GS algorithm returns an optimal schedule for $O2||C_{max}$ in linear time such that
$
C_{max} = \max\left\{\max_{J_j\in J}(p_{1j} + p_{2j}), \sum_{J_j\in J}p_{1j}, \sum_{J_j\in J}p_{2j}\right\}.
$
\label{th_o_gs}
\end{theorem}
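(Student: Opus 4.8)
The plan is to show that the GS schedule is feasible with makespan exactly
\[
L := \max\Bigl\{\max_{J_j\in J}(p_{1j}+p_{2j}),\ \sum_{J_j\in J}p_{1j},\ \sum_{J_j\in J}p_{2j}\Bigr\},
\]
by proving the two inequalities $C_{max}^{*}\ge L$ (for \emph{every} feasible open shop schedule) and $C_{max}(\mathrm{GS})\le L$. Since the GS schedule is itself feasible, the two bounds pinch together and force the optimal makespan, the GS makespan, and $L$ all to coincide. The first inequality is the easy ``lower bound'' direction and the second is the constructive ``upper bound'' direction.

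First I would establish $C_{max}^{*}\ge L$ from three elementary observations about any feasible schedule: (i) the two operations of a single job $J_j$ cannot be processed simultaneously, so the schedule length is at least $p_{1j}+p_{2j}$, and hence at least $\max_{J_j\in J}(p_{1j}+p_{2j})$; (ii) machine $M_1$ must process every first operation and cannot process two at once, so the length is at least $\sum_{J_j\in J}p_{1j}$; (iii) by symmetry it is also at least $\sum_{J_j\in J}p_{2j}$. Taking the maximum of the three gives $C_{max}^{*}\ge L$ with essentially no work.

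The substance lies in exhibiting a feasible schedule of length exactly $L$. I would split into two cases according to which term of $L$ is attained. In the \emph{dominant-job} case, where some job $J_r$ has $p_{1r}+p_{2r}=L>\max\{\sum_j p_{1j},\sum_j p_{2j}\}$, the construction is transparent: place $O_{1r}$ on $[0,p_{1r}]$ and $O_{2r}$ on $[L-p_{2r},L]$ (these are disjoint since $p_{1r}+p_{2r}=L$), then pack the remaining first operations into the free interval of $M_1$ and the remaining second operations into the free interval of $M_2$, which fits because the leftover loads are at most $L-p_{1r}$ and $L-p_{2r}$ respectively. In the \emph{general} case $L=\max\{\sum_j p_{1j},\sum_j p_{2j}\}$, I would reproduce the Gonzalez--Sahni diagonal construction: partition the jobs into $A=\{J_j:p_{1j}\ge p_{2j}\}$ and $B=\{J_j:p_{1j}<p_{2j}\}$, process one group followed by the other with reversed roles on the two machines, and deliberately place the operations with the largest $p_{1j}$ and largest $p_{2j}$ at the two ends, so that at the ``seam'' where the groups meet no job has overlapping operations. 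Both machines are then packed with total idle at most $L-\sum_j p_{ij}$, so neither exceeds $[0,L]$. (Equivalently, one can invoke the LAPT rule, always scheduling on a free machine the available job with the longest operation on the \emph{other} machine, which yields the same makespan.)

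The main obstacle is the feasibility verification in the general case: one must check simultaneously that the two operations of each job are disjoint in time and that both machines complete within $[0,L]$, and the delicate point is exactly the boundary between the $A$-block and the $B$-block, where a careless ordering produces an overlap (a small three-job example already exhibits this). Pinning down which jobs sit at the ends, and confirming the dominant-job case is handled separately rather than swept into the diagonal, is where the argument earns its keep. Finally, the linear running time is immediate: partitioning into $A$ and $B$, locating the extremal operations, and emitting the two ordered operation lists are each $O(n)$.
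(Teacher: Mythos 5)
The paper does not actually prove this statement: Theorem \ref{th_o_gs} is quoted from Gonzalez and Sahni \cite{Gonzalez1976} and used as a black box (e.g., in the analysis of the GAR algorithm), so there is no in-paper proof to compare yours against. Judged on its own, your skeleton is the standard one. The three lower bounds are correct, and your dominant-job case is complete: once $O_{1r}$ occupies $[0,p_{1r}]$ on $M_1$ and $O_{2r}$ occupies $[L-p_{2r},L]=[p_{1r},L]$ on $M_2$, the two free intervals $[p_{1r},L]$ on $M_1$ and $[0,p_{1r}]$ on $M_2$ are disjoint, so every other job is automatically self-consistent and only the capacity check you give is needed (you should state that disjointness explicitly, since feasibility of the other jobs is a separate issue from capacity).

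The general case, however, contains a genuine gap, and it is exactly where the content of the theorem lives. ``Process one group followed by the other with reversed roles, and place the operations with the largest $p_{1j}$ and largest $p_{2j}$ at the two ends'' does not determine a schedule, and the natural readings of it fail: if $B=\emptyset$, starting the $A$-block at time $0$ on both machines makes every job overlap itself; and within a block, running the jobs in the same order on both machines with a fixed offset requires an offset comparable to the block's total load, not to a single operation. The actual Gonzalez--Sahni construction resolves this by treating each block as a reversed flow shop anchored by a carefully chosen extremal job, and the feasibility proof reduces to inequalities of the form $p_{2,\pi(i)}\le p_{1,\pi(1)}$ along the block; none of that verification appears in your write-up. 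You correctly name the seam between the blocks as the delicate point, but naming the crux is not the same as resolving it. Either carry out the block-by-block no-overlap check (specifying which job anchors which end of which machine and proving the inequalities), or give a full proof that the LAPT rule you mention attains $L$, or simply do what the paper does and cite \cite{Gonzalez1976}.
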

\begin{theorem}[\cite{bt82,Shmoys1994}]
R{\'a}csm{\'a}ny algorithm returns a 2-approximation algorithm for $Om||C_{max}$ such that
$
C_{max} \leq \sum_{J_j\in J}p_{lj} + \sum_{i = 1}^m p_{ik} \leq 2C^*_{max},
$
where $J_k$ is the last completed job and processed on $M_l$.\label{th_o_racsmany}
\end{theorem}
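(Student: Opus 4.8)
The plan is to analyze any dense schedule produced by the R{\'a}csm{\'a}ny algorithm by concentrating on the machine $M_l$ that processes the last-completing job $J_k$. First I would record the two standard lower bounds for the open shop optimum: the machine-load bound $C^*_{max} \geq \max_i \sum_{J_j\in J} p_{ij}$ and the job-length bound $C^*_{max} \geq \max_j \sum_{i=1}^m p_{ij}$. Both are immediate, since on any fixed machine the operations must be processed without overlap, and the operations of any fixed job must be processed sequentially.

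Next I would decompose the interval $[0, C_{max}]$ on machine $M_l$ into busy periods and idle periods. The total busy time of $M_l$ equals $\sum_{J_j\in J} p_{lj}$, because every job's operation on $M_l$ must finish by time $C_{max}$. It then remains to bound the total idle time of $M_l$. Since $J_k$ completes last at time $C_{max}$, its operation $O_{lk}$ occupies the final interval $[C_{max} - p_{lk}, C_{max}]$, so all idle time of $M_l$ lies in $[0, C_{max} - p_{lk}]$.

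The crux, and the step I expect to be the main obstacle, is to charge each idle moment of $M_l$ to the remaining operations of $J_k$. At any instant in $[0, C_{max} - p_{lk}]$ when $M_l$ is idle, the definition of a dense schedule forces that no unprocessed operation can be started on $M_l$; in particular $O_{lk}$, which has not yet begun, cannot be started. The only reason $O_{lk}$ can be blocked is that $J_k$ is simultaneously being processed on some other machine. Hence the idle time of $M_l$ is at most the total processing time of $J_k$ off $M_l$, namely $\sum_{i\neq l} p_{ik} \leq \sum_{i=1}^m p_{ik}$. Making this charging rigorous requires verifying that the disjoint idle intervals of $M_l$ are covered by disjoint sub-intervals of the operations $\{O_{ik}: i\neq l\}$, using the fact that $J_k$ occupies at most one machine at a time, so that no overcounting occurs.

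Combining the busy and idle bounds yields $C_{max} \leq \sum_{J_j\in J} p_{lj} + \sum_{i=1}^m p_{ik}$, which is the claimed structural inequality. Finally, applying the machine-load bound to the first sum and the job-length bound to the second gives $\sum_{J_j\in J} p_{lj} + \sum_{i=1}^m p_{ik} \leq C^*_{max} + C^*_{max} = 2C^*_{max}$, completing the $2$-approximation guarantee.
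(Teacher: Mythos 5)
Your proposal is correct: the charging of $M_l$'s idle time to the off-$M_l$ operations of the last job $J_k$ via the denseness property, combined with the machine-load and job-length lower bounds, is exactly the standard argument behind this bound. Note that the paper itself offers no proof of Theorem~\ref{th_o_racsmany} --- it is imported as a known result from B{\'a}r{\'a}ny--Fiala and Shmoys--Stein--Wein --- so your write-up simply reconstructs the classical proof that those citations stand in for, and it does so without any gaps.
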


\begin{theorem}[\cite{Sevastianov1998}]
The SW algorithm is a PTAS for $Om||C_{max}$.\label{th_o_sw}
\end{theorem}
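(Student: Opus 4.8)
The plan is to design an algorithm that, for every fixed $m$ and every fixed $\epsilon>0$, produces a schedule of makespan at most $(1+\epsilon)C^*_{max}$ in time polynomial in $n$. First I would fix the natural lower bound $T=\max\{\max_i\sum_{J_j\in J}p_{ij},\ \max_{J_j\in J}\sum_i p_{ij}\}$; clearly $T\le C^*_{max}$, and Theorem~\ref{th_o_racsmany} already gives $C^*_{max}\le 2T$, so $T$ estimates the optimum up to a factor of $2$. All subsequent thresholds will be expressed as fractions of $T$.

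Next I would split the jobs into \emph{large} and \emph{small} according to total processing time: call $J_j$ large if $\sum_i p_{ij}>\delta T$ and small otherwise, where $\delta=\delta(\epsilon,m)$ is a constant to be fixed at the end. Summing over machines shows $\sum_{J_j}\sum_i p_{ij}=\sum_i\sum_{J_j}p_{ij}\le mT$, so the number of large jobs is at most $m/\delta$, a constant independent of $n$. Because only constantly many large jobs exist and each has at most $m$ operations, the relative order of the large operations on each machine ranges over a finite set; I would enumerate all combinations of these per-machine orders and, for each, compute the minimum makespan of the large jobs by a critical-path/linear-programming evaluation. This yields in constant time a schedule of the large jobs whose makespan $C_L$ equals the optimum for the large jobs alone, and since deleting the small jobs from an optimal full schedule cannot increase the makespan, $C_L\le C^*_{max}$.

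The heart of the proof is inserting the small jobs while keeping the makespan below $(1+\epsilon)C^*_{max}$. Every operation of a small job has size $<\delta T$, so I would invoke a compact-vector-summation / dense-schedule argument in the spirit of Theorem~\ref{th_o_racsmany}: the small operations can be fitted around the fixed large-job blocks so that on each machine the only idle time is forced either by the (already near-optimal) large-job schedule or by a small job being processed elsewhere. Such an argument bounds the final makespan by $C_L$ plus an additive term of the form $\mathrm{poly}(m)\cdot\delta T$. Choosing $\delta=\epsilon/\mathrm{poly}(m)$ then makes this excess at most $\epsilon T\le\epsilon C^*_{max}$, giving total makespan $C_L+\epsilon C^*_{max}\le(1+\epsilon)C^*_{max}$, as required; the running time is $n^{O(1)}$ for each fixed $m,\epsilon$ because only the constant-size large-job enumeration and a polynomial-time filling step are performed.

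I expect the main obstacle to be exactly this filling step: one must reconcile the rigidly fixed large-job schedule with the placement of the small operations so that no small job's two operations overlap in time and the idle gaps created by the reserved large-job intervals contribute only $O(\delta T)$ to the makespan. Controlling the latter is delicate, since naive greedy insertion can be blocked by a reserved interval and inherit a large job's length rather than a small one's. The clean way around this is to route the small operations through Sevastianov's polynomial-time vector-summation bound, whose additive error depends only on $m$ and the maximum small-operation size; most of the real work of the proof is the careful bookkeeping that lets this bound coexist with the enumerated large-job configuration.
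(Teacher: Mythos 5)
This statement is quoted from Sevastianov and Woeginger \cite{Sevastianov1998}; the paper gives no proof of it, but the internals of the SW algorithm are reproduced almost verbatim in the proof of Theorem \ref{th_SAE} in \ref{app_SAE_proof}, so your outline can be checked against that. Your overall shape is right and matches the cited argument: the lower bound $T$, the observation that only $O(m/\delta)$ jobs can be large, the enumeration of an optimal large-job configuration in constant time with $C_L\le C^*_{max}$, and a greedy insertion of the small operations. The step you flag as delicate is, however, exactly where your sketch has a genuine gap, and it cannot be closed with a single threshold $\delta$.

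Concretely: with one threshold, a small operation can have size just below $\delta T$, and the large-job schedule creates up to $m/\delta$ reserved intervals per machine; a gap that fails to accommodate a pending small operation can therefore waste up to $\delta T$ of idle time, so the naive bound on the total idle is $(m/\delta)\cdot\delta T=mT$, which is useless, not $\mathrm{poly}(m)\cdot\delta T$. The SW proof (see the definitions of $J'_L$, $O^1_S$, $O^2_S$ and inequalities (\ref{eq_SAE_alpha2})--(\ref{eq_SAE_alpha1}) in \ref{app_SAE_proof}) resolves this with a \emph{two-scale} split: small operations of size in $(\alpha^2 T,\alpha T]$ form a ``medium'' class whose \emph{total} length is forced to be at most $\frac{\epsilon}{3+\epsilon}T$ by choosing $\alpha$ from the doubly-exponentially decreasing sequence $\alpha_k=\bigl(\frac{\epsilon}{m(3+\epsilon)}\bigr)^{2^k}$ via pigeonhole (the classes are disjoint and their union has total length at most $mT$), while operations of size at most $\alpha^2 T$ waste at most $\alpha^2 T$ per gap, i.e.\ $(m/\alpha)\cdot\alpha^2 T=m\alpha T\le\frac{\epsilon}{3+\epsilon}T$ in total. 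This cascade is the key idea of the theorem beyond the obvious large/small dichotomy, it is why the job-count bound $N$ is doubly exponential in $m/\epsilon$, and it is absent from your proposal; ``route the small operations through Sevastianov's vector-summation bound'' names the right source but does not supply the argument. A secondary, minor point: in the large-job enumeration you must fix, for each large job, the order in which its $m$ operations visit the machines as well as the per-machine sequences, and then evaluate the makespan of the resulting disjunctive configuration; this is still a constant-size enumeration, so it does not affect correctness.
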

For job shop schedule with an unlimited number of jobs, few polynomially solvable cases are known. One is $J2|op\leq 2|C_{max}$, which can be solved by Jackson's rule \cite{Jackson56} that is an extension of Johnson's rule for $F2||C_{max}$ \cite{Johnson54}, where $op\leq 2$ means there are at most $2$ operations per job. The idea is to divide the jobs into two sets according to the processing order of the jobs, and implement Johnson's rule for each job set, then combine the schedules. In fact, a slightly change may lead to $\mathrm{NP}$-hard problems. For instance, $J2|op\leq 3|C_{max}$ and $J3|op\leq 2|C_{max}$ are $\mathrm{NP}$-hard \cite{Lenstra1977}, $J2|p_{ij}\in\{1,2\}|C_{max}$ and $J3|p_{ij}=1|C_{max}$ are strongly $\mathrm{NP}$-hard \cite{Lenstra1979}. For the general case $J||C_{max}$, Shmoys, Stein and Wein \cite{Shmoys1994} constructed a randomized approximation algorithm with worst-case ratio $O\left(\frac{\log^2(m\mu)}{\log{\log(m\mu)}}\right)$, where $\mu$ is the maximum number of operations per job. Schmidt, Siegel and Srinivasan \cite{Schmidt1995} obtained a deterministic algorithm with the same bound by derandomizing. We refer to it as the SSW-SSS algorithm. Moreover, for fixed $m$, the best known approximation algorithm is also proposed in \cite{Shmoys1994} with an approximation factor $2 + \epsilon$, where $\epsilon > 0$ is an arbitrary constant. If $\mu$ is a constant, the problem is denoted as $Jm|op\leq \mu|C_{max}$ that admits a PTAS \cite{Jansen2003}. We list the main results mentioned above as follows.

\begin{theorem}[\cite{Jackson56}]
Jackson's rule solves $J2|op\leq 2|C_{max}$ in $O(n\log n)$ time. \label{th_j_jackson}
\end{theorem}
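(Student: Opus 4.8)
The plan is to reduce the two-machine job shop with at most two operations per job to two instances of the two-machine flow shop $F2||C_{max}$, for which Johnson's rule \cite{Johnson54} is known to be optimal. First I would partition the jobs by their routes: let $\mathcal{A}$ be the jobs processed on $M_1$ then $M_2$, let $\mathcal{B}$ be those processed on $M_2$ then $M_1$, and let $\mathcal{C}$ (resp. $\mathcal{D}$) be the jobs requiring only $M_1$ (resp. only $M_2$). Restricted to $\mathcal{A}$ the instance is exactly an $F2||C_{max}$ problem with first stage $M_1$, and restricted to $\mathcal{B}$ it is an $F2||C_{max}$ problem with first stage $M_2$; Johnson's rule yields the optimal permutations $\pi_{\mathcal{A}}$ and $\pi_{\mathcal{B}}$ for each. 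Jackson's rule then builds the schedule from the induced priority lists: on $M_1$ the order is $\pi_{\mathcal{A}}$, then $\mathcal{C}$ (arbitrary), then $\pi_{\mathcal{B}}$; on $M_2$ the order is $\pi_{\mathcal{B}}$, then $\mathcal{D}$ (arbitrary), then $\pi_{\mathcal{A}}$; each operation is started as early as its machine and its job-internal predecessor allow.

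Second, I would establish optimality by an adjacent-interchange argument carried out in two stages. Stage one shows that some optimal schedule orders the $\mathcal{A}$-jobs according to $\pi_{\mathcal{A}}$ on both machines and the $\mathcal{B}$-jobs according to $\pi_{\mathcal{B}}$ on both machines: this is the classical Johnson interchange applied \emph{within} each stream, since reordering one stream while freezing everything else can only preserve or improve the makespan. Stage two shows that some optimal schedule additionally places every $\mathcal{A}$-job before every $\mathcal{B}$-job on $M_1$ and every $\mathcal{B}$-job before every $\mathcal{A}$-job on $M_2$, with the single-operation jobs $\mathcal{C}, \mathcal{D}$ slotted in without loss. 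These two orderings together are precisely the schedule produced by Jackson's rule, which gives optimality.

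The complexity bound then follows at once: partitioning the jobs is $O(n)$, sorting $\mathcal{A}$ and $\mathcal{B}$ by Johnson's comparison costs $O(n\log n)$, and the list-scheduling sweep that assigns start times to all operations is $O(n)$, so the total running time is $O(n\log n)$.

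The hard part will be stage two of the interchange argument, namely the interaction between the two opposing streams. I would track the idle time each stream forces on each machine and argue that granting a stream top priority on its \emph{entry} machine and bottom priority on its \emph{exit} machine never induces extra idling on the other stream; the delicate point is that swapping an $\mathcal{A}$-job and a $\mathcal{B}$-job adjacent on one machine must be shown not to increase the completion time on either machine, which needs a case split on whether the binding bottleneck lies in the $\mathcal{A}$-stream or the $\mathcal{B}$-stream. As a fallback I would keep in reserve a direct lower bound -- a combined load-and-chain bound over the two streams -- and verify that the Jackson schedule attains it exactly, thereby bypassing the two-stream interchange entirely.
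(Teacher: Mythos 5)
The paper does not actually prove this statement: Theorem~\ref{th_j_jackson} is quoted from Jackson (1956) with only a one-sentence description of the algorithm (split the jobs by processing order, run Johnson's rule on each part, combine), so there is no in-paper proof to compare against. Your reconstruction of the algorithm matches that description and your plan is essentially correct, but I would promote your ``fallback'' to the main argument and demote the interchange. The two-stage interchange has two points you are currently glossing over: in stage one, Johnson's rule is invoked for the $\mathcal{A}$-stream on an $F2$ instance in which $M_2$ only becomes available after the $\mathcal{B}$- and $\mathcal{D}$-operations finish, so you need the (true, but not automatic) fact that Johnson's order remains optimal under a stage-two release time; and in stage two, although the block swap does go through --- an adjacent pair with $b\in\mathcal{B}$ preceding $a\in\mathcal{A}$ on $M_1$ can be swapped because $a$'s $M_1$-operation has no job predecessor and $b$'s $M_1$-operation has no job successor, so the block gains no idle time and delays nothing downstream --- you must still argue that the resulting schedule is a permutation schedule on each stream before stage one applies. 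The reserve argument is the standard, cleaner proof: in Jackson's schedule any idle time on $M_1$ arises only from waiting for a $\mathcal{B}$-job to clear $M_2$, where the $\mathcal{B}$-jobs run from time $0$ without idling, so the completion time of $M_1$ is either the machine load $\sum_{J_j}p_{1j}$ or $\sum_{j\preceq k}p_{2j}+\sum_{j\succeq k}p_{1j}$ for some $k\in\mathcal{B}$ in Johnson order, hence at most the optimal $F2||C_{max}$ value of the sub-instance $\mathcal{B}$; each of these quantities lower-bounds $C^*_{max}$, and the symmetric statement holds for $M_2$, giving optimality directly. The $O(n\log n)$ accounting is fine. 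Note that this case analysis (no idle on a machine versus idle caused by the opposing stream) is essentially the one the paper does carry out later, in its proof of Lemma~\ref{lemma_J2}.
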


\begin{theorem}[\cite{Shmoys1994,Schmidt1995}]
The SSW-SSS algorithm solves $Jm||C_{max}$ in polynomial time, and return a schedule with makespan
$$O\left(\frac{\log^2(m\mu)}{\log{\log(m\mu)}} \left(\max_{i\in\{1,\cdots,m\}}{\sum_{J_j\in J}\mu_{ij}p_{ij}}+ \max_{J_j\in J}{\sum^{m}_{i=1}\mu_{ij}p_{ij}}\right)\right).$$\label{th_SSW_SSS}
\end{theorem}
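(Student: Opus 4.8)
The statement is a cited result, so the plan is to reconstruct the random-delay and Lov\'asz Local Lemma (LLL) argument of Shmoys, Stein and Wein and then invoke the Schmidt--Siegel--Srinivasan derandomization. Write $L=\max_{i}\sum_{J_j\in J}\mu_{ij}p_{ij}$ for the maximum machine load and $P=\max_{J_j\in J}\sum_{i=1}^{m}\mu_{ij}p_{ij}$ for the length of the longest job. Both are immediate lower bounds on $C^{*}_{max}$: no feasible schedule can finish before the busiest machine has cleared its work, nor before the longest chain of operations has been laid out end to end. Hence it suffices to build a feasible job shop schedule of length $O\!\left(\frac{\log^{2}(m\mu)}{\log\log(m\mu)}\right)(L+P)$.

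First I would preprocess to control the number of distinct operation lengths, since this is where the first logarithmic factor originates. Rounding every operation length up to the nearest power of two costs only a factor $2$ in $L$ and in $P$; a standard grouping step then discards or merges the operations that are negligibly short relative to $\frac{L}{m\mu}$, charging their aggregate length directly against $L$. After appropriate scaling this leaves only $O(\log(m\mu))$ distinct size classes.

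Next, for each size class in isolation I would schedule its (after rescaling, essentially unit-length) operations by the random-delay method: partition time into blocks equal to the common operation length, give each job an independent, uniformly random integer start delay drawn from a range of length $\Theta(L)$, and place each operation in the block fixed by its job's delay and its position in the chain. With this delay range the expected number of operations contending for a fixed machine in a fixed block is $O(1)$. The crux is to turn this expectation into a per-block guarantee: defining a ``bad'' event for each (machine, block) pair that its load exceeds $c\,\frac{\log(m\mu)}{\log\log(m\mu)}$, each such event is independent of all but polynomially many others, so the LLL certifies a choice of delays under which no bad event occurs. This produces a feasible schedule for the class of length $O\!\left(\frac{\log(m\mu)}{\log\log(m\mu)}\right)(L+P)$.

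Finally I would concatenate the $O(\log(m\mu))$ per-class schedules; the chain order of each job is respected within and across classes, so the concatenation is a feasible job shop schedule and its length is the product $O(\log(m\mu))\cdot O\!\left(\frac{\log(m\mu)}{\log\log(m\mu)}\right)(L+P)=O\!\left(\frac{\log^{2}(m\mu)}{\log\log(m\mu)}\right)(L+P)$, which is the claimed bound. The main obstacle is the LLL step: one must define the bad events so that their dependency degree stays polynomial while simultaneously keeping the congestion threshold as small as $\frac{\log(m\mu)}{\log\log(m\mu)}$, and, more seriously, the LLL only asserts existence of good delays. Converting it into a deterministic polynomial-time procedure is exactly the derandomization carried out by Schmidt, Siegel and Srinivasan \cite{Schmidt1995}, which I would cite rather than reprove.
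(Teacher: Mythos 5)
The paper gives no proof of this theorem: it is imported verbatim from Shmoys--Stein--Wein \cite{Shmoys1994} and Schmidt--Siegel--Srinivasan \cite{Schmidt1995}, so the only meaningful comparison is with those sources. Your overall plan --- the two lower bounds $L$ and $P$, power-of-two rounding, random delays with a Lov\'asz-Local-Lemma congestion bound of $O(\log(m\mu)/\log\log(m\mu))$, and deferring the derandomization to \cite{Schmidt1995} --- is a faithful reconstruction of that argument.

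There is, however, one step that fails as literally stated: scheduling each operation-length class \emph{separately} and concatenating the per-class schedules. A job's chain of operations may alternate between size classes (e.g.\ a long operation, then a short one, then a long one), so a concatenation that places all of one class before all of another cannot respect the precedence order within such a job; your claim that ``the chain order of each job is respected within and across classes'' is not automatic and is false in general. The actual argument in \cite{Shmoys1994} avoids this: it applies the random delays to a \emph{single} schedule containing all operations, proves the per-(machine, time) congestion bound there, and then \emph{flattens} that one schedule --- expanding each unit of time into a block long enough to serialize the contending operations --- using the power-of-two rounding to ensure that longer operations align with block boundaries rather than being split. The extra factor of $O(\log(m\mu))$ enters through this flattening over the $O(\log(m\mu))$ length scales, not through concatenating independently built schedules. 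The rest of your sketch (the delay range $\Theta(L)$, the polynomial dependency degree needed for the LLL, and the role of \cite{Schmidt1995}) is consistent with the cited proof.
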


Furthermore, a well-known inapproximability result is that $O||C_{max}$, $F||C_{max}$ and $J||C_{max}$ cannot be approximated within $\frac{5}{4}$ unless $\mathrm{P}=\mathrm{NP}$ \cite{Williamson97}. Recently, Mastrolilli and Svensson \cite{Mastrolilli:2011:HAF:2027216.2027218} showed that $J||C_{max}$ cannot be approximated within $O(\log(m\mu)^{1-\epsilon})$ for $\epsilon > 0$ based on a stronger assumption than $\mathrm{P}\neq\mathrm{NP}$.

To conclude this subsection, we list some trivial bounds for a dense shop schedule. Denote by $C_{max}$ the makespan of an arbitrary dense shop schedule with job set $J$, and we have

\be
C_{max} \geq \max_{i\in\{1,\cdots,m\}}\left\{\sum_{J_j\in J}\mu_{ij}p_{ij}\right\},\label{eq_max}
\ee
and
\be
C_{max} \leq \sum_{J_j\in J}\sum^m_{i=1}\mu_{ij}p_{ij}.\label{eq_min}
\ee
For each job, we have
\be
C_{max} \geq \sum^m_{i=1}\mu_{ij}p_{ij},\qquad \forall J_j \in J.\label{eq_job}
\ee

\subsection{Review of Shortest Path Problems}\label{sec_sp}
It is well-known that Dijkstra algorithm solves the classic shortest path problem with nonnegative edge weights in $O(|V|^2)$ time \cite{DIJ59}.
We have mentioned the min-max shortest path problem, that is $\mathrm{NP}$-hard even for $K = 2$, and Aissi, Bazgan and Vanderpooten proposed a FPTAS if $K$ is a fixed number \cite{ABV06}. We refer to their algorithm as the ABV algorithm, which has the following result.

\begin{theorem}[\cite{ABV06}]\label{th_minmax}
Given $\epsilon > 0$, in a directed graph with $K$ nonnegative weights on each arc, where $K$ is a fixed number. The ABV algorithm finds a path $P$ between two specific vertices satisfying $\max_{i\in \{1, 2, \cdots, K\}} \left\{\sum_{a_j\in P}w^i_j\right\} \leq (1+\epsilon) \max_{i\in \{1, 2, \cdots, K\}}\left\{\sum_{a_j\in P'}w^i_j\right\}$
for any path $P'$ between the two specified vertices, and the running time is $O(|A||V|^{K + 1}/\epsilon^K)$.
\end{theorem}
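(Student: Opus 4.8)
The plan is to show that the min-max value can be approximated within $(1+\epsilon)$ by computing, through scaling and dynamic programming, a polynomially-sized family of $s$-$t$ paths that approximately covers the Pareto frontier of the $K$-objective shortest path problem, and then returning the member of that family with the smallest maximum coordinate. The central reduction is the following: if $\mathcal{S}$ is a set of $s$-$t$ paths such that every $s$-$t$ path $Q$ is $(1+\epsilon)$-dominated by some $P\in\mathcal{S}$, meaning $\sum_{a_j\in P} w^i_j \le (1+\epsilon)\sum_{a_j\in Q} w^i_j$ for all $i$, then the $P\in\mathcal{S}$ minimizing $\max_i \sum_{a_j\in P} w^i_j$ is a $(1+\epsilon)$-approximation. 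Indeed, the true min-max optimizer $P'$ is $(1+\epsilon)$-dominated by some member of $\mathcal{S}$, and coordinatewise dominance is preserved under taking the maximum coordinate, so the objective of that member is at most $(1+\epsilon)$ times that of $P'$.

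To construct such an $\mathcal{S}$, I would first bound the optimal value: since any simple $s$-$t$ path uses at most $|V|-1$ arcs, the totals $\sum_{a_j} w^i_j$ give crude per-coordinate upper bounds, and a search over guesses localizes the optimal value $V^*$. Fixing a guess $t$, I would discard every arc whose weight exceeds $t$ in some coordinate, since such an arc cannot lie on a path of value at most $t$, then scale each surviving weight by $\delta = \epsilon t/(|V|-1)$ and round down to $\bar w^i_j = \lfloor w^i_j/\delta\rfloor$. Because a relevant path accumulates weight at most $t$ in each coordinate, each accumulated scaled coordinate is an integer bounded by $(|V|-1)/\epsilon$, so there are only $O\bigl((|V|/\epsilon)^K\bigr)$ attainable scaled weight vectors.

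I would then run a Bellman--Ford-style labeling dynamic program whose states are pairs consisting of a vertex and an attainable scaled weight vector: relaxing each arc propagates a state at its tail to a state at its head by coordinatewise addition of the rounded arc weight, and for each cell one representative path is retained. Correctness of the cover follows from the rounding estimate: replacing each true arc weight by $\delta\bar w^i_j$ underestimates each coordinate by less than $\delta$ per arc, hence by less than $(|V|-1)\delta = \epsilon t$ over the whole path, so the retained representative's true weight is within a $(1+\epsilon)$ factor of the target path's weight in every coordinate. Since the state space has size $O\bigl(|V|\cdot(|V|/\epsilon)^K\bigr)$ and the propagation sweeps the arcs, a careful accounting of the relaxation rounds and the value-guessing loop yields the stated running time $O(|A||V|^{K+1}/\epsilon^K)$ for fixed $K$.

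The main obstacle is controlling the accumulation of rounding error simultaneously in all $K$ coordinates while keeping the number of dynamic-programming states polynomial. The tension is that finer rounding improves the approximation but inflates the state count; the resolution is to tie the additive per-arc error $\delta$ to the guessed value $t$ and the path length $|V|-1$, so that the total error is a controlled $\epsilon$-fraction of $t$ while the per-coordinate range of scaled values stays $O(|V|/\epsilon)$. The secondary technical point is verifying that the guessing loop and the cell-based pruning remain consistent across iterations, which is precisely what pins down the exact polynomial dependence on $|V|$ and $1/\epsilon$ recorded in the statement.
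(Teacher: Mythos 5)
This theorem is imported from \cite{ABV06}; the paper does not prove it, and the closest thing to an in-paper argument is the modified ABV algorithm in the appendix, whose route is: run an exact pseudo-polynomial dynamic program whose labels are the \emph{unrounded} accumulated weight vectors (bounded componentwise by $W=\max_k\sum_j w^k_j$, indexed by vertex and number of arcs used), and only afterwards invoke ``the scaling technique'' as a black box to convert this into an FPTAS. Your proposal reorganizes the same FPTAS in the other standard order: guess the optimal value $t$, prune arcs that are too heavy, round the weights down with granularity $\delta=\epsilon t/(|V|-1)$, and run the label-propagation DP directly on the rounded instance, with the state space $O(|V|\cdot(|V|/\epsilon)^K)$ giving the claimed running time. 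Both routes are correct and essentially equivalent; yours makes the error analysis explicit where the paper defers it to the citation. Two small imprecisions are worth flagging. First, your ``central reduction'' asserts coordinatewise multiplicative $(1+\epsilon)$-domination, but the rounding argument only delivers an \emph{additive} error of at most $\epsilon t$ in each coordinate; since $t\approx V^*=\max_i\sum_{a_j\in P'}w^i_j$, this converts to a multiplicative $(1+\epsilon)$ factor only for the maximum coordinate, which is all the min--max objective needs, but the lemma as stated is stronger than what the construction provides for coordinates much smaller than $t$. Second, the value-guessing loop is not free: to land on the stated bound $O(|A||V|^{K+1}/\epsilon^K)$ one must first localize $V^*$ to within a constant factor (e.g., via a crude approximation or a geometric search charged separately) rather than multiply the DP cost by the number of guesses. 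Neither point undermines the argument, but both need a sentence in a complete write-up.
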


In this paper, sometimes we need to find the min-mix shortest path among all the paths visiting some specified arcs if such a path exists. We propose a modified ABV algorithm for this problem in \ref{app_modABV}.

\section{Computational Complexity}\label{sec_com}
First, notice that $Om||C_{max}$ and $Jm||C_{max}$ are special cases of the corresponding combination problems, thus the combination problem is not easier than its component optimization problems. On the other hand, we know $O2||C_{max}$ and $J2|op\leq 2|C_{max}$ are polynomially solvable, but we can simply verify that the corresponding combination problems, say $O2|\mathrm{shortest}~\mathrm{path}|C_{max}$ and $J2|op\leq 2, \mathrm{shortest}~\mathrm{path}|C_{max}$, are $\mathrm{NP}$-hard by adopting the same reduction proposed in \cite{NW13} for the $\mathrm{NP}$-hardness of $F2|\mathrm{shortest}~\mathrm{path}|C_{max}$. We summarize the results as Theorem \ref{comp_m}.

\begin{theorem}
Even if $m = 2$, $Jm|\mathrm{shortest}~\mathrm{path}|C_{max}$ is strongly $\mathrm{NP}$-hard and $Om|\mathrm{shortest}~\mathrm{path}|C_{max}$ is $\mathrm{NP}$-hard. $J2|op\leq 2, \mathrm{shortest}~\mathrm{path}|C_{max}$ is $\mathrm{NP}$-hard. \label{comp_m}
\end{theorem}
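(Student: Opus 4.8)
The plan is to establish each of the three hardness claims by polynomial reductions that isolate the combinatorial difficulty already present in the underlying shortest path constraint, following the strategy used in \cite{NW13} for $F2|\mathrm{shortest}~\mathrm{path}|C_{max}$. The key observation is that the combination problem inherits hardness from two sources: the shop scheduling component (which supplies strong $\mathrm{NP}$-hardness for job shop even without the path constraint) and the path-selection component (which converts polynomially solvable base cases into $\mathrm{NP}$-hard combinations). I would treat these two sources separately, which cleanly matches the three assertions of the theorem.

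For the strong $\mathrm{NP}$-hardness of $Jm|\mathrm{shortest}~\mathrm{path}|C_{max}$ with $m=2$, I would reduce from $J2|p_{ij}\in\{1,2\}|C_{max}$ (or $J3|p_{ij}=1|C_{max}$), which is strongly $\mathrm{NP}$-hard by \cite{Lenstra1979}. The idea is to embed an arbitrary instance of the pure scheduling problem into the graph so that the optimal $s$--$t$ path is forced to use \emph{all} the arcs. The cleanest construction places the $n$ jobs on a single directed path $s = v_0 \to v_1 \to \cdots \to v_n = t$, where arc $(v_{j-1}, v_j)$ carries job $J_j$ with its prescribed operations. Since the unique $s$--$t$ path selects every job, the combination instance is identical to the scheduling instance, and the optimal makespan is preserved exactly; strong $\mathrm{NP}$-hardness transfers immediately. (One must check the reduction is pseudopolynomial-free, but since the base problem is strongly $\mathrm{NP}$-hard this is automatic.)

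For the ordinary $\mathrm{NP}$-hardness of $O2|\mathrm{shortest}~\mathrm{path}|C_{max}$ and $J2|op\leq 2,\mathrm{shortest}~\mathrm{path}|C_{max}$, the base scheduling problems $O2||C_{max}$ and $J2|op\leq 2|C_{max}$ are themselves polynomially solvable (Theorems \ref{th_o_gs} and \ref{th_j_jackson}), so the single-path trick no longer suffices; here the hardness must come from the path selection. I would reduce from \textsc{Partition}. Given integers $b_1,\dots,b_n$ with total $B = \sum_j b_j$, build a graph consisting of $n$ parallel ``choice gadgets'' in series between $s$ and $t$: for each $j$, a pair of parallel arcs between consecutive waypoints, one arc carrying a job with operation sizes tied to $b_j$ (say $p_{1j}=b_j$, $p_{2j}=0$ on one arc and $p_{1j}=0$, $p_{2j}=b_j$ on the other). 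An $s$--$t$ path then corresponds to a subset selection, and by Theorem \ref{th_o_gs} the makespan of the selected jobs equals $\max\{\sum p_{1j}, \sum p_{2j}, \max_j(p_{1j}+p_{2j})\}$, which is minimized (and equals $B/2$) exactly when a perfect partition exists. The same gadget works for $J2|op\leq 2$ by assigning each job a single nontrivial operation. I expect the \textbf{main obstacle} to be verifying that the gadget interacts correctly with the makespan formula so that the path objective is genuinely controlled by the balance of the two machine loads and not by some spurious slack; in particular one must ensure the $\max_j(p_{1j}+p_{2j})$ term and any auxiliary ``connector'' arcs do not dominate, so that the reduction is faithful in both directions. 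Since this argument is asserted in the excerpt to follow directly from the construction in \cite{NW13}, I would reuse that gadget verbatim and simply substitute the appropriate two-machine shop schedule value, checking only that the objective formula specialized via Theorem \ref{th_o_gs} reproduces the \textsc{Partition} threshold.
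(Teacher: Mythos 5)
Your proposal is correct and follows essentially the same route as the paper: the strong NP-hardness of $J2|\mathrm{shortest}~\mathrm{path}|C_{max}$ comes from observing that the pure scheduling problem (strongly NP-hard by \cite{Lenstra1979}) is a special case obtained by forcing a unique $s$--$t$ path, while the NP-hardness of $O2|\mathrm{shortest}~\mathrm{path}|C_{max}$ and $J2|op\leq 2, \mathrm{shortest}~\mathrm{path}|C_{max}$ is obtained by reusing the Partition-based parallel-arc reduction of \cite{NW13} for $F2|\mathrm{shortest}~\mathrm{path}|C_{max}$, with the makespan formula of Theorem \ref{th_o_gs} (resp. Jackson's rule) playing the role of Johnson's rule. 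Your explicit choice-gadget verification matches what the paper leaves implicit, so there is nothing further to add.
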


Now we consider the case where the number of machines $m$ is part of the input. Williamson et al. showed that it is $\mathrm{NP}$-hard to approximate $O||C_{max}$, $F||C_{max}$ or $J||C_{max}$ within a factor less than $\frac{5}{4}$ by a reduction from the restricted versions of 3-SAT \cite{Williamson97}. They also showed that deciding if there is a scheduling of length at most 3 is in $\mathbf{P}$.
We show that for these problems combining with shortest path problem, deciding if there is a scheduling of length at most 1 is still $\mathrm{NP}$-hard. Our proof is established by constructing a reduction from 3-Dimensional Matching (3DM) that is $\mathrm{NP}$-complete \cite{GJ79}.

{\sc 3-Dimensional Matching}:\\
{\bf Instance:} Sets $A = \{a_1, \cdots, a_n\}$, $B = \{b_1, \cdots, b_n\}$, $C = \{c_1, \cdots, c_n\}$, and a family $F = \{T_1, \cdots, T_m\}$ of triples with $|T_k\cap A| = |T_k\cap B| = |T_k\cap C| = 1$ for $k = 1, \cdots, m$. Assume that $m\geq n$ without loss of generality.\\
{\bf Question:} Does $F$ contains a matching, i.e. a subfamily $F'$ for which $|F'| = n$ and $\cup_{T_k\in F'}T_k = A \cup B \cup C$?

\begin{theorem}
For $O|\mathrm{shortest}~\mathrm{path}|C_{max}$, deciding if there is a scheduling of length at most $1$ is $\mathrm{NP}$-hard. \label{th_inapp}
\end{theorem}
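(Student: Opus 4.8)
The plan is to construct a polynomial-time reduction from \textsc{3-Dimensional Matching} to the decision version of $O|\mathrm{shortest}~\mathrm{path}|C_{max}$, exhibiting an instance whose optimal makespan is at most $1$ if and only if the 3DM instance admits a matching. The central idea is to encode the three sets $A$, $B$, $C$ using $3n$ machines, so that each triple $T_k = \{a_i, b_j, c_\ell\}$ becomes a job (arc) whose only nonzero operations are of unit length on the three machines corresponding to $a_i$, $b_j$, and $c_\ell$. In an open shop, a schedule of length exactly $1$ on a given machine is possible only if the total unit load on that machine is at most $1$; thus a makespan-$1$ schedule forces every machine to process at most one unit operation, which corresponds precisely to choosing a collection of triples that covers each element at most once.

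First I would build the graph so that selecting an $s$--$t$ path corresponds to selecting exactly $n$ triples that together cover all $3n$ elements. A natural construction is a layered graph with $n+1$ vertices $v_0 = s, v_1, \dots, v_n = t$, where the arcs from $v_{r-1}$ to $v_r$ represent ``the $r$-th chosen triple.'' One must be careful, however, because an arbitrary layering would allow the same triple to be reused or would not enforce the disjointness of the chosen triples via the path structure alone. The cleaner approach is to let the disjointness be enforced \emph{entirely by the scheduling constraint} (makespan $\le 1$ means each machine sees load at most $1$, i.e.\ each element is covered at most once), while the path constraint enforces only that exactly $n$ arcs are selected. With $3n$ machines and each selected triple contributing three unit operations, a matching of size $n$ contributes exactly $3n$ unit operations distributed one per machine, which an open shop can schedule in makespan $1$; conversely any makespan-$1$ schedule forces the selected $n$ triples to be pairwise element-disjoint, hence to form a matching.

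Next I would verify both directions of the equivalence. For the forward direction, given a matching $F'$ of size $n$, the corresponding path has each of the $3n$ machines loaded with exactly one unit operation, so by Theorem~\ref{th_o_gs}'s bound type (or directly, since open shop with load $\le 1$ per machine and at most one operation per machine-job pair schedules trivially in time $1$) the makespan is $1$. For the reverse direction, a path with makespan at most $1$ must place total load at most $1$ on every machine by bound~\eqref{eq_max}; since each selected triple contributes a full unit to three machines, no two selected triples can share an element, and because the path selects exactly $n$ arcs covering $3n$ machine-slots out of $3n$ machines, every element is covered exactly once, yielding a matching.

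The main obstacle I anticipate is designing the graph $G$ so that the $s$--$t$ paths are in clean bijective (or at least onto) correspondence with size-$n$ subfamilies of $F$ while keeping the construction polynomial and avoiding spurious paths that would break the equivalence. The scheduling side is comparatively routine given the unit-operation structure and the bounds~\eqref{eq_max}--\eqref{eq_min}; the delicate part is ensuring the path constraint contributes exactly the cardinality requirement $|F'| = n$ and does not accidentally permit reusing a triple or selecting fewer arcs. I would resolve this with a layered gadget in which each layer offers one copy of every triple as a parallel arc, so that any $s$--$t$ path picks exactly $n$ triples (with repetition possible in the graph but ruled out by the makespan-$1$ scheduling constraint), thereby delegating all the combinatorial disjointness to the load condition and keeping the correspondence transparent.
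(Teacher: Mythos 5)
There is a genuine gap, and it is in the scheduling side that you describe as ``comparatively routine.'' You encode each triple $T_k=\{a_i,b_j,c_\ell\}$ as a \emph{single} job with three unit-length operations on three distinct machines. In an open shop, distinct operations of the same job may not be processed simultaneously, so any such job already forces $C_{max}\geq \sum_{i=1}^m \mu_{ij}p_{ij}=3$ (this is exactly bound (\ref{eq_job})). Hence even when the 3DM instance has a perfect matching, your instance has no schedule of length $1$ (or even $2$), the forward direction of your equivalence fails, and the threshold-$1$ reduction collapses. The paper avoids this by splitting each triple into \emph{three separate jobs} $J^a_k, J^b_k, J^c_k$, each with a single unit operation, placed on three consecutive arcs $(v_{k,a},v_{k,b})$, $(v_{k,b},v_{k,c})$, $(v_{k,c},v_{k+1,a})$; then every selected job needs only one time unit, and makespan $1$ is equivalent to every machine having load at most $1$. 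Your construction can be repaired the same way (replace each parallel ``triple arc'' in a layer by a length-three directed path of single-operation arcs), but as written the reduction does not prove hardness at threshold $1$.

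Your other deviation from the paper --- enforcing the cardinality $|F'|=n$ by a layered graph with $n$ layers, each offering every triple, rather than the paper's $m$ blocks with skip arcs landing on $m-n$ dummy machines --- is sound once the splitting fix is made: a path then selects exactly $n$ triples, repetition and overlap are both killed by the load-$\leq 1$ condition on the $3n$ element machines, and $n$ disjoint triples necessarily cover all of $A\cup B\cup C$. This is arguably a cleaner way to get the lower bound on the number of selected triples than the paper's dummy-machine gadget, at the cost of a graph with $\Theta(nm)$ arcs instead of $\Theta(m^2)$. But that refinement is secondary; the operation-granularity error must be fixed first.
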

\begin{proof}
Given an instance of 3DM, we construct an instance of $O|\mathrm{shortest}~\mathrm{path}|C_{max}$ with $2n + m$ machines. For $i = 1, \cdots, n$, machines $M_i$, $M_{n + i}$ and $M_{2n + i}$ correspond to $a_i$, $b_i$ and $c_i$ respectively, and the remained $m - n$ machines denoted by $M_{3n + 1},M_{3n + 2}\cdots, M_{2n+m}$ are `dummy' machines. The graph has $3m + 1$ vertices, denoted by $v_{1,a}, v_{1,b}, v_{1,c}, v_{2,a}, v_{2,b}, v_{2,c}, \cdots, v_{m, a}, v_{m, b}, v_{m, c}, v_{m+1,a}$. For $k = 1, \cdots, m$, there are arcs $(v_{k,a}, v_{k,b})$, $(v_{k,b}, v_{k,c})$ and$(v_{k,c}, v_{k + 1,a})$ corresponding to jobs $J^a_k, J^b_k, J^c_k$. Let $p(J^j_k)$, $j=a,b,c$ and $k = 1, \cdots, m$, be a $(2n + m)$-dimensional vector whose $i$-th component corresponds to the processing time of $J^j_k$ on $M_i$. Let $\mathbf{e}_{i}$ be a $(2n + m)$-dimensional vector with 1 for the $i$-th component and $0$s for the others. Now we can define the processing times of the jobs: $p(J^a_k) = \mathbf{e}_i$ if $a_i\in T_k$; $p(J^b_k) = \mathbf{e}_{n+i}$ if $b_i\in T_k$; $p(J^c_k) =\mathbf{e}_{2n+i}$ if $c_i\in T_k$. Selecting jobs $J^a_k, J^b_k, J^c_k$ implies that $T_k$ is in the matching. Moreover for $k = 1, \cdots, m$, there are $m - n$ parallel arcs from $v_{k,a}$ towards $v_{k+1, a}$, corresponding to jobs $J^1_k, J^2_k, \cdots, J^{m-n}_k$ with processing times $\mathbf{e}_{3n + 1}, \mathbf{e}_{3n + 2}, \cdots, \mathbf{e}_{2n+m}$ respectively. Selecting those jobs implies that $T_k$ is not in the matching. The objective is to find a path from $v_{1,a}$ to $v_{m+1,a}$ and to schedule the corresponding jobs (arcs) such that the makespan is at most $1$, that completes the reduction. One example is shown in Figure \ref{fig_inapp_2}.

\begin{figure}[ht]
  \centering
  \includegraphics[width=5in]{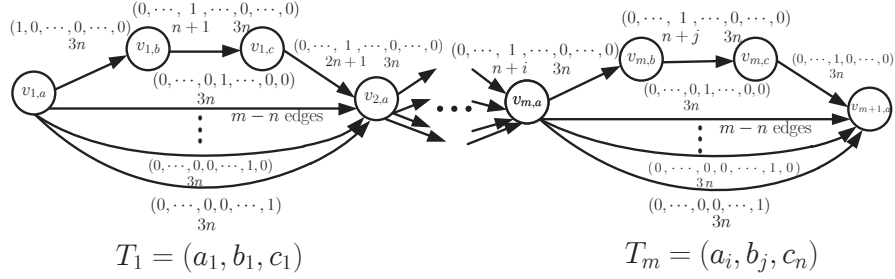}\\
  \caption{An example of the reduction with $T_1 = \{a_1, b_1, c_1\}, T_m = \{a_i, b_j, c_n\}$.}\label{fig_inapp_2}
\end{figure}

It can be verified that a schedule with makespan at most 1 if and only if there is a matching for 3DM, and then the result follows. The details are deferred to the full version.
\qed
\end{proof}

Notice that the reduction in Theorem \ref{th_inapp} is also valid for $F|\mathrm{shortest}~\mathrm{path}|C_{max}$ and $J|\mathrm{shortest}~\mathrm{path}|C_{max}$, since each job in the reduction has only one nonzero processing time. Therefore we have the following result.
\begin{corollary}
$O||C_{max}$, $F||C_{max}$ and $J||C_{max}$ do not admit an approximation algorithm with worst-case ratio less than $2$, unless $\mathrm{P}=\mathrm{NP}$.
\end{corollary}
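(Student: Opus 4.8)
The plan is to promote the decision hardness of Theorem~\ref{th_inapp} into an inapproximability bound by exhibiting an integrality gap located at the makespan value $1$. First I would record the one structural feature of the instance constructed in the proof of Theorem~\ref{th_inapp} that makes this possible: every job carries exactly one nonzero operation, and that operation has length $1$. Hence, for any feasible $s$-$t$ path, each machine load in the induced schedule is a nonnegative integer, so the optimal makespan of the instance is a positive integer. Coupling this with the equivalence from Theorem~\ref{th_inapp} --- a schedule of makespan at most $1$ exists if and only if the 3DM instance has a matching --- gives a clean gap: the optimum equals $1$ on ``yes'' instances of 3DM and is at least $2$ on ``no'' instances, with nothing attainable in the open interval $(1,2)$.

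With this gap the argument is the standard one. Suppose some polynomial-time algorithm approximated $O|\mathrm{shortest}~\mathrm{path}|C_{max}$ within a ratio $\rho<2$. Given a 3DM instance I would build the instance of Theorem~\ref{th_inapp} in polynomial time and run the algorithm on it: on a ``yes'' instance the optimum is $1$ and the algorithm must return a path of makespan at most $\rho<2$, whereas on a ``no'' instance every path, and in particular the returned one, has makespan at least $2$. Thus the reported makespan drops below $2$ exactly when the 3DM instance is feasible, which would decide 3DM in polynomial time and force $\mathrm{P}=\mathrm{NP}$. To carry the bound over to $F|\mathrm{shortest}~\mathrm{path}|C_{max}$ and $J|\mathrm{shortest}~\mathrm{path}|C_{max}$ I would simply observe that, because each job has a single operation, the chain/ordering constraints are vacuous and the open-, flow- and job-shop readings of the instance coincide; the same gap and the same reduction therefore apply verbatim.

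The only ingredient that needs genuine care --- and the part deferred to the full version of Theorem~\ref{th_inapp} --- is the lower end of the gap, namely that a ``no'' instance of 3DM forces every feasible path to place at least two unit operations on a common machine. The hard step will be the counting that establishes this: a path that selects $t$ triples uses $3t$ real-machine operations and $m-t$ dummy operations spread over only $m-n$ dummy machines, so $t<n$ overloads some dummy machine, $t>n$ makes two selected triples share an element and hence a real machine, and $t=n$ is compatible with makespan $1$ only if the $n$ triples are pairwise disjoint, i.e. form a matching. Once this is in place, the integrality of the loads makes the jump from ``optimum $>1$'' to ``optimum $\ge 2$'' immediate, and the gap-preservation template finishes the proof.
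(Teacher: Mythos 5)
Your proposal is correct and follows essentially the same route as the paper: the paper's corollary is exactly the standard gap argument applied to the reduction of Theorem~\ref{th_inapp}, using the observation that all processing times are $0$ or $1$ (so the optimum is $1$ on ``yes'' instances and at least $2$ otherwise) together with the remark that the single-nonzero-operation structure makes the same instance valid for the flow shop and job shop variants. The counting argument you sketch for the ``no'' direction is precisely the part the paper defers to its full version, so including it is a welcome addition rather than a deviation.
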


To our knowledge, the best known inapproximability results based on $\mathrm{P}\neq\mathrm{NP}$ for $F||C_{max}$, $O||C_{max}$ and $F||C_{max}$ are still $\frac{5}{4}$. The corollary implies that the combination problems of the three shop scheduling problems and the shortest problem have stronger inapproximability results.

\section{Approximation Algorithms}\label{sec_approx}
\subsection{An Intuitive Algorithms for Arbitrary $m$}\label{sec_alg_nat}
An intuitive algorithm was proposed for $F2|\mathrm{shortest}~\mathrm{path}|C_{max}$ in \cite{NW13}. The idea is to find the classic shortest path by setting the weight of an arc be the sum of processing times of its corresponding job, and then schedule the returned jobs by Johnson's rule. This simple idea can be extended to all the combination problems we considered, even if the number of machines is an input.

\begin{algorithm}[htb]
\caption{The SD algorithm for {\small$O|\mathrm{shortest}~\mathrm{path}|C_{max}$ ($J|\mathrm{shortest}~\mathrm{path}|C_{max}$)}}
\label{alg_1}
\begin{algorithmic}[1]
\STATE Find the shortest path in $G$ with weights $w^1_j := \sum^m_{i = 1}\mu_{ij}p_{ij}$ by Dijkstra algorithm. For the returned path $P$, construct the job set $J_P$.

\STATE Obtain a dense schedule for the jobs of $J_P$ by an arbitrary open (job) shop scheduling algorithm. Let $\sigma$  be the returned job schedule and $C_{max}$ the returned makespan, and denote the job set $J_P$ by $S$.
\RETURN $S$, $\sigma$ \AND $C_{max}$.
\end{algorithmic}
\end{algorithm}

It is easy to show that Algorithm \ref{alg_1} is a $m$-approximation algorithm,
by the bounds (\ref{eq_max}), (\ref{eq_min}), (\ref{eq_job}) and the fact that the returned path is the shortest path with respect to the single weight of each arc.



\subsection{A Unified Algorithms for Fixed $m$}\label{sec_alg_im}
In \cite{NW13}, a $\frac{3}{2}$-approximation algorithm was proposed for $F2|\mathrm{shortest}~\mathrm{path}|C_{max}$. The idea is to iteratively find a feasible path by the ABV algorithm with two weights for each arcs and schedule the corresponding jobs by Johnson's rule, and then adaptively modified the weights of arcs. We generalize this idea to solve the combination problems considered in this paper. We first propose a unified framework which denoted as UAR($Alg$, $\rho$, $m$), where $Alg$ is a polynomial time algorithm used for shop scheduling, $\rho$ is a control parameter to decide the termination rule of the iterations and the jobs to be modified, and $m$ is the number of machines.

\begin{algorithm}[htb]
\caption{Algorithm UAR($Alg$, $\rho$, $m$)}
\label{alg_uar}
\begin{algorithmic}[1]
\STATE Initially,$(w^1_j, w^2_j, \cdots, w^m_j) :=(\mu_{1j}p_{1j}, \mu_{2j}p_{2j}, \cdots, \mu_{mj}p_{mj})$, for $a_j\in A$ corresponding to $J_j$.

\STATE Given $\epsilon >0$, implement the ABV algorithm to obtain a feasible path $P$ to $SP$, and construct the corresponding job set as $J_P$.

\STATE Schedule the jobs of $J_P$ by the algorithm $Alg$, denote the returned makespan as $C'_{max}$, and the job schedule as $\sigma'$.

\STATE $S : = J_P$, $\sigma: = \sigma'$, $C_{max}:=C'_{max}$, $D:=\emptyset$, $M : = (1+\epsilon)\sum_{J_j\in J}\sum_{i = 1}^{m}\mu_{ij}p_{ij} + 1$.

\WHILE{$J_P \cap D = \emptyset $ \AND there exists $J_j$ in $J_P$ satisfies $\sum_{i=1}^m\mu_{ij}p_{ij} \geq \rho C'_{max}$}
    \FOR {all jobs satisfy $\sum_{i=1}^m\mu_{ij}p_{ij} \geq \rho C'_{max}$ in $J\backslash D$}
        \STATE $(w^1_j, w^2_j, \cdots, w^m_j) := (M, M, \cdots, M)$, $D := D\cup \{J_j\}$.
    \ENDFOR
    \STATE Implement the ABV algorithm to obtain a feasible path $P$ to $SP$, and construct the corresponding job set as $J_P$.

    \STATE Schedule the jobs of $J_P$ by the algorithm $Alg$, denote the returned makespan as $C'_{max}$, and the job schedule as $\sigma'$.

    \IF{$C'_{max} < C_{max}$}
        \STATE $S : = J_P$, $\sigma: = \sigma'$, $C_{max}:=C'_{max}$.
    \ENDIF
\ENDWHILE
\RETURN $S$, $\sigma$ \AND $C_{max}$.
\end{algorithmic}
\end{algorithm}

By setting the appropriate scheduling algorithms and control parameters, we can derive algorithms for different combination problems. Notice that at most $n$ jobs are modified in the UAR($Alg$, $\rho$, $m$) algorithm, therefore the iterations execute at most $n$ times. Since the scheduling algorithms for shop scheduling and the ABV algorithms are all polynomial time algorithms (for fixed $m$ and $\epsilon$), we claim that the following algorithms based on UAR($Alg$, $\rho$, $m$) are polynomial-time algorithms. We present the algorithms and their performance as follows and the detailed proofs are given in \ref{app_proof}.

We first apply the UAR($Alg$, $\rho$, $m$) algorithm to $O2|\mathrm{shortest}~\mathrm{path}|C_{max}$ by setting $Alg$ be the GS algorithm and $\rho = 1$, and refer to this algorithm as the GAR algorithm.

\begin{algorithm}[htb]
\caption{The GAR algorithm for $O2|\mathrm{shortest}~\mathrm{path}|C_{max}$}
\label{alg_gar}
\begin{algorithmic}[1]
\STATE Set $m = 2$, $Alg$ be the GS algorithm for $O2||C_{max}$ and $\rho$ = 1.
\STATE Solve the problem by implementing UAR($Alg$, $\rho$, $m$).
\end{algorithmic}
\end{algorithm}

\begin{theorem}
The GAR algorithm is a FPTAS for $O2|\mathrm{shortest}~\mathrm{path}|C_{max}$.  \label{th_GAR}
\end{theorem}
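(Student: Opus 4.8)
The plan is to establish the two defining properties of a FPTAS separately: a $(1+\epsilon)$ worst-case ratio and a running time polynomial in the input size and in $1/\epsilon$. For the running time I would first note that each pass through the while-loop marks at least one fresh job: the loop body is reached only when the current forbidden-free path $J_P$ contains some $J_j$ with $p_{1j}+p_{2j}\geq\rho C'_{max}=C'_{max}$ (recall $\mu_{ij}=1$ and $\rho=1$ here), and that job lies in $J\setminus D$, so it enters $D$. Hence there are at most $n$ iterations, each invoking the ABV algorithm once (running in $O(|A||V|^{3}/\epsilon^{2})$ time for $K=m=2$ by Theorem \ref{th_minmax}) and the GS algorithm once (linear time by Theorem \ref{th_o_gs}); the total is polynomial in $n$ and $1/\epsilon$.

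For the guarantee, the engine is the GS formula of Theorem \ref{th_o_gs}, which for any selected set $J_P$ gives
\[
C'_{max}=\max\bigl\{L(P),\,B(P)\bigr\},\qquad L(P):=\max\Bigl\{\textstyle\sum_{J_j\in J_P}p_{1j},\ \sum_{J_j\in J_P}p_{2j}\Bigr\},\quad B(P):=\max_{J_j\in J_P}(p_{1j}+p_{2j}).
\]
Feeding the two weights $w^1_j=p_{1j}$, $w^2_j=p_{2j}$ to ABV makes the min-max shortest path minimize (up to $1+\epsilon$) precisely the load term $L(P)$, so the sole remaining obstruction to a small makespan is a single bottleneck job realizing $B(P)$. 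The heart of the proof is a \emph{safety invariant}: at the start of every iteration, either the recorded makespan $C_{max}$ already equals the optimum $C^*_{max}$, or the optimal path $P^*$ contains no forbidden job. Its maintenance combines the feasibility bound $C'_{max}\geq C^*_{max}$ with (\ref{eq_job}): a job forbidden in a given iteration has $p_{1j}+p_{2j}\geq C'_{max}$, so if it lay on $P^*$ then $C^*_{max}\geq p_{1j}+p_{2j}\geq C'_{max}\geq C^*_{max}$ collapses to equality and forces $C_{max}=C^*_{max}$; otherwise no $P^*$-job is ever forbidden. The large value $M=(1+\epsilon)\sum_{J_j}\sum_i\mu_{ij}p_{ij}+1$ ensures ABV returns a forbidden-free path whenever one exists, since any path through a forbidden job has min-max value at least $M$, exceeding $(1+\epsilon)\sum_{J_j}\sum_i p_{ij}$.

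With the invariant available, I would conclude by analysing the two ways the loop can halt. If it halts because the final path has no job with $p_{1j}+p_{2j}\geq C'_{max}$, then $C'_{max}=L(P)$; in the nontrivial branch of the invariant $P^*$ is still an admissible path with $L(P^*)\leq C^*_{max}$, so the ABV guarantee yields $C_{max}\leq C'_{max}=L(P)\leq(1+\epsilon)C^*_{max}$. If instead it halts because the final path uses a forbidden job, then by the choice of $M$ every $s$-$t$ path must use a forbidden job, in particular $P^*$ does, which the invariant permits only in the branch $C_{max}=C^*_{max}$. Either way $C_{max}\leq(1+\epsilon)C^*_{max}$. I expect the main difficulty to lie in stating and maintaining the invariant cleanly --- in particular, keeping track of which $C'_{max}$ controls each forbidding step and showing that deleting a bottleneck job from the optimal path can happen only after $C^*_{max}$ has already been matched.
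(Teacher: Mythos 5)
Your proposal is correct and follows essentially the same route as the paper's proof: your ``safety invariant'' is the inductive restatement of the paper's case split on whether $J^*\cap D=\emptyset$, and you use the same ingredients --- the GS makespan formula to reduce to the load term, the bound (\ref{eq_job}) to show that forbidding a job of $P^*$ forces optimality, the choice of $M$ together with the ABV guarantee to show the returned path avoids $D$ whenever some path does, and the at-most-$n$-iterations count for the running time. No gaps; the argument matches the paper's.
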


For $Om|\mathrm{shortest}~\mathrm{path}|C_{max}$ where $m$ is fixed, based on UAR($Alg$, $\rho$, $m$) and R{\'a}csm{\'a}ny algorithm, we obtain the following algorithm, referred to the RAR algorithm by considering an appropriate $\rho$.

\begin{algorithm}[htb]
\caption{The RAR algorithm for $Om|\mathrm{shortest}~\mathrm{path}|C_{max}$}
\label{alg_rar}
\begin{algorithmic}[1]
\STATE Set $Alg$ be R{\'a}csm{\'a}ny algorithm for $Om||C_{max}$ and $\rho = \frac{1}{2}$.
\STATE Solve the problem by implementing UAR($Alg$, $\rho$, $m$).
\end{algorithmic}
\end{algorithm}
\begin{theorem}
Given $\epsilon > 0$, the RAR algorithm is a $(2+\epsilon)$-approximation algorithm for $Om|\mathrm{shortest}~\mathrm{path}|C_{max}$.  \label{th_RAR}
\end{theorem}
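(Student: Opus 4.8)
The plan is to let $P^*$ denote an optimal path with job set $J^*$ and optimal makespan $OPT$, and to track how the weight-modification step of UAR interacts with the jobs of $P^*$ across iterations. I index the ABV calls as iterations $0, 1, 2, \ldots$, where iteration $0$ is the initial call (line 2) and iteration $t \geq 1$ is the $t$-th pass through the while loop; let $P_t$ be the returned path and $C_t$ the R{\'a}csm{\'a}ny makespan $C'_{max}$ produced at iteration $t$. Two elementary observations drive everything. First, by bound (\ref{eq_max}) applied to the optimal dense schedule, $\max_i \sum_{J_j \in J^*} p_{ij} \leq OPT$, and by bound (\ref{eq_job}) every single job satisfies $\sum_{i} p_{ij} \leq OPT$; the latter is what controls which jobs can ever be marked. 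Second, the constant $M$ fixed in line 4 satisfies $M > (1+\epsilon)\sum_{J_j\in J}\sum_i p_{ij} \geq (1+\epsilon)\,OPT$.

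I would first prove an availability lemma: if no job of $J^*$ has been assigned weight $M$ before iteration $t$, then taking $P' = P^*$ in Theorem \ref{th_minmax} the min-max weight of $P_t$ is at most $(1+\epsilon)\,OPT < M$; consequently $P_t$ can use no marked job, since a single marked job would already push the min-max weight to at least $M$. Hence all weights along $P_t$ are the original $p_{ij}$ and $\max_i \sum_{a_j \in P_t} p_{ij} \leq (1+\epsilon)\,OPT$. Next I would record a marking lemma: a job of $J^*$ has $\sum_i p_{ij} \leq OPT$, so if such a job is marked at iteration $t$ (which requires $\sum_i p_{ij} \geq \rho\,C_{t-1} = \tfrac12 C_{t-1}$), then $C_{t-1} \leq 2\,OPT$, and since the returned makespan never exceeds $C_{t-1}$ we already obtain $C_{max} \leq 2\,OPT$.

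The main argument is then a dichotomy. If some job of $J^*$ is ever marked, the marking lemma immediately gives $C_{max} \leq 2\,OPT \leq 2(1+\epsilon)\,OPT$. Otherwise $P^*$ stays available at every iteration, so by the availability lemma each $P_t$ avoids all marked jobs, i.e. $J_{P_t}\cap D=\emptyset$ holds throughout; hence the while loop can terminate only through its second clause, namely when the final path $P_T$ contains no job $J_j$ with $\sum_i p_{ij} \geq \tfrac12 C_T$. For that terminal path the last completed job $J_k$ therefore satisfies $\sum_i p_{ik} < \tfrac12 C_T$, while the load on its machine $M_l$ is $\sum_{a_j\in P_T} p_{lj} \leq \max_i \sum_{a_j\in P_T} p_{ij} \leq (1+\epsilon)\,OPT$ by the availability lemma. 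Substituting into the R{\'a}csm{\'a}ny bound of Theorem \ref{th_o_racsmany} gives $C_T \leq (1+\epsilon)\,OPT + \tfrac12 C_T$, hence $C_T \leq 2(1+\epsilon)\,OPT$ and $C_{max}\leq C_T$. In either branch $C_{max} \leq 2(1+\epsilon)\,OPT$, and running the algorithm with $\epsilon/2$ in place of $\epsilon$ converts this into the claimed factor $2+\epsilon$, the running time remaining polynomial for fixed $m$.

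The step I expect to be the main obstacle is arguing that in the no-marking branch the while loop terminates for the right reason, so that the termination condition (no remaining big job along the current path) dovetails with the R{\'a}csm{\'a}ny decomposition and the slack $\tfrac12 C_T$ exactly absorbs the last-job term. Making the availability/marking dichotomy airtight is the part that needs care: one must verify that a marked job genuinely forces the ABV min-max value above $M$, and that each loop iteration marks at least one previously unmarked job so that the process halts within $n$ rounds. Once these are in place, combining the min-max path bound with the R{\'a}csm{\'a}ny bound is routine arithmetic.
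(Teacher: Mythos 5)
Your proposal is correct and follows essentially the same route as the paper's proof: the same dichotomy on whether the optimal path's job set meets the marked set $D$, with bound (\ref{eq_job}) handling the marking case and the combination of Theorem \ref{th_minmax} with the R\'acsm\'any bound of Theorem \ref{th_o_racsmany} handling the other. Your explicit treatment of why the loop must terminate via the ``no big job'' clause and the rescaling $\epsilon\mapsto\epsilon/2$ to turn $2(1+\epsilon)$ into $2+\epsilon$ are details the paper glosses over, but the argument is the same.
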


The framework also can be applied to the combination problem of job shop scheduling and the shortest path problem. For the combination of $J2|op\leq 2|C_{max}$ and the shortest path problem, we obtain a $(\frac{3}{2}+ \epsilon)$-approximation algorithm by implementing Jackson's rule and setting $\rho = \frac{2}{3}$ in the UAR($Alg$, $\rho$, $m$) algorithm. We refer to this algorithm as the JJAR algorithm, and describe it in Algorithm \ref{alg_jjar}. Remind that all $\mu_{ij} = 1$ in $J2|op\leq 2|C_{max}$.

\begin{algorithm}[htb]
\caption{The JJAR algorithm for $J2|op\leq 2, \mathrm{shortest}~\mathrm{path}|C_{max}$}
\label{alg_jjar}
\begin{algorithmic}[1]
\STATE Set $m = 2$, $Alg$ be Jackson's rule for $J2|op\leq 2|C_{max}$ and $\rho = \frac{2}{3}$.
\STATE Solve the problem by implementing UAR($Alg$, $\rho$, $m$).
\end{algorithmic}
\end{algorithm}

Before studying the worst-case performance of the JJAR algorithm, we establish the following lemma. Let $(1\rightarrow2)$ ($(2\rightarrow1)$) indicate the order that a job needs to be processed on $M_1$ ($M_2$) first and then on $M_2$ ($M_1$).

\begin{lemma}
For $J2|op \leq 2|C_{max}$, let $C^J_{max}$ be the makespan returned by Jackson's rule. Suppose we change the processing order of all jobs to be $(1\rightarrow2)$ $((2\rightarrow1))$, and the processing times keep unchanged. Then schedule the jobs by Johnson's rule for $F2||C_{max}$, and denote the makespan as $C^1_{max}$ $(C^2_{max})$. We have $C^J_{max}\leq \max\{C^{1}_{max}, C^2_{max}\}$. \label{lemma_J2}
\end{lemma}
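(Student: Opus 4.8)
The plan is to analyze the structure of the schedule produced by Jackson's rule and to bound its makespan by four elementary quantities, each of which is dominated by either $C^1_{max}$ or $C^2_{max}$. Recall that Jackson's rule partitions the jobs into the set $S_1$ of jobs requiring order $(1\to2)$ (together with the jobs that use only $M_1$) and the set $S_2$ of jobs requiring order $(2\to1)$ (together with the jobs that use only $M_2$), applies Johnson's rule to each set separately, and then combines them so that on $M_1$ the jobs of $S_1$ are processed first (in their Johnson order) followed by those of $S_2$, while on $M_2$ the jobs of $S_2$ are processed first followed by those of $S_1$.

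First I would observe that $C^J_{max}$ is the later of two completion times: the completion $C^{M_2}$ of the last $S_1$-operation on $M_2$, and the completion $C^{M_1}$ of the last $S_2$-operation on $M_1$ (one-operation jobs finish even earlier and are absorbed below). Because the $S_2$-jobs are processed first on $M_2$ and each has its $M_2$-operation as its first operation, they occupy $M_2$ without idleness on the interval of length $P_2':=\sum_{J_j\in S_2}p_{2j}$ starting at time $0$; hence when the $S_1$-jobs are scheduled on $M_2$ they see a two-machine flow shop ($M_1\to M_2$) in which $M_1$ is available from time $0$ but $M_2$ is released only at time $P_2'$. The key computation is to bound the completion of this delayed flow shop: writing $F_1$ for the stand-alone Johnson makespan of $S_1$ (in direction $1\to2$) and $P_2:=\sum_{J_j\in S_1}p_{2j}$, I would show $C^{M_2}\le\max\{F_1,\,P_2'+P_2\}=\max\{F_1,\,W_2\}$, where $W_2:=\sum_{J_j\in J}p_{2j}$ is the total load on $M_2$. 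By the symmetric argument, $C^{M_1}\le\max\{F_2,\,W_1\}$, with $F_2$ the stand-alone Johnson makespan of $S_2$ (in direction $2\to1$) and $W_1:=\sum_{J_j\in J}p_{1j}$.

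Combining these gives $C^J_{max}\le\max\{F_1,F_2,W_1,W_2\}$, and it remains to dominate each of the four terms. For the loads, any $(1\to2)$ flow-shop schedule has makespan at least $W_1$ and any $(2\to1)$ flow-shop schedule at least $W_2$, so $W_1\le C^1_{max}$ and $W_2\le C^2_{max}$. For the stand-alone makespans I would use monotonicity of the optimal flow-shop value under adding jobs: since $S_1\subseteq J$, restricting an optimal $(1\to2)$-schedule of all jobs to $S_1$ yields a feasible schedule of $S_1$ of no larger makespan, whence $F_1\le C^1_{max}$, and symmetrically $F_2\le C^2_{max}$. Putting everything together yields $C^J_{max}\le\max\{C^1_{max},C^2_{max}\}$.

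The main obstacle is the step that bounds the delayed flow shop: I must argue carefully that the $S_1$-operations on $M_2$ genuinely behave like a flow shop whose second machine is released exactly at $P_2'$ (checking that the interleaving on $M_1$ creates no spurious idle time affecting $S_1$, and that the one-operation jobs are correctly folded into $W_1,W_2$), and then that the release-time flow-shop completion splits into the clean bound $\max\{F_1,\,P_2'+P_2\}$. Once this release-time estimate is established, the remaining inequalities are routine.
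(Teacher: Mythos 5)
Your proof is correct, but it takes a genuinely different route from the paper's. You derive the classical closed-form characterization of Jackson's schedule, $C^J_{max}\le\max\{F_1,F_2,W_1,W_2\}$, by viewing each job class as a two-machine flow shop whose second machine is released at the time the other class's first operations finish, and then you kill each of the four terms by the machine-load lower bound or by monotonicity of the optimal $F2\|C_{max}$ value under adding jobs. The release-time estimate you flag as the main obstacle is in fact immediate from unrolling the semi-active recursion $C_2(k)=\max\{C_2(k-1),C_1(k)\}+p_{2k}$ with $C_2(0)=r$, which gives exactly $\max\bigl\{r+\sum_j p_{2j},\,F_1\bigr\}$, so there is no gap there. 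The paper instead fixes, without loss of generality, that the $(1\rightarrow2)$-load on $M_1$ dominates the $(2\rightarrow1)$-load on $M_2$ (so $M_1$ has no idle time), and then runs a case analysis on whether the makespan is the $M_1$ load and whether $M_2$ idles; in the nontrivial subcase it identifies the critical job $J_k$, writes $C^J_{max}=\sum_{j=1}^k p_{1j}+\sum_{j=k}^l p_{2j}$, and argues that Johnson's rule on the full reoriented job set preserves the relative order of these jobs, so the same critical sum lower-bounds $C^1_{max}$. Your decomposition avoids both the symmetry reduction and the case analysis and yields a slightly stronger, reusable statement (an exact formula for $C^J_{max}$ up to the folding of one-operation jobs); the paper's argument is more local but needs the order-preservation observation about Johnson's rule, which your monotonicity step replaces by a softer and more robust restriction argument. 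The only detail you should make explicit is the treatment of one-operation jobs (assign them $p_{2j}=0$ or $p_{1j}=0$ and absorb them into $S_1$ or $S_2$), but this is routine and the paper glosses over it as well.
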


The proof of lemma \ref{lemma_J2} is also given in \ref{app_proof}.
\begin{theorem}
Given $\epsilon > 0$, the JJAR algorithm is a $(\frac{3}{2}+\epsilon)$-approximation algorithm for $J2|op\leq 2, \mathrm{shortest}~\mathrm{path}|C_{max}$.  \label{th_jjar}
\end{theorem}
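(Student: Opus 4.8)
The plan is to follow the template of the $\frac{3}{2}$-approximation for $F2|\mathrm{shortest}~\mathrm{path}|C_{max}$ in \cite{NW13}, using Lemma \ref{lemma_J2} to pass from the Jackson schedule back to a flow-shop bound. Throughout let $OPT$ denote the optimum makespan, attained by an $s$-$t$ path $P^*$ with job set $J^*$. Since every $\mu_{ij}=1$, the two weights in Algorithm \ref{alg_uar} are $w^1_j=p_{1j}$ and $w^2_j=p_{2j}$, so the min-max shortest path objective is exactly the larger machine load $L:=\max\{\sum_{J_j\in J_P}p_{1j},\sum_{J_j\in J_P}p_{2j}\}$. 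By (\ref{eq_max}) and (\ref{eq_job}) the two lower bounds I would rely on are $OPT\ge\max\{\sum_{J_j\in J^*}p_{1j},\sum_{J_j\in J^*}p_{2j}\}$ and $OPT\ge p_{1j}+p_{2j}$ for every $J_j\in J^*$.

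The first analytic ingredient I would prove is a per-instance flow-shop bound: for $F2||C_{max}$ scheduled by Johnson's rule, $C_{max}\le L+\max_j\min\{p_{1j},p_{2j}\}$, where $L_1=\sum_j p_{1j}$, $L_2=\sum_j p_{2j}$ and $L=\max\{L_1,L_2\}$. Writing the makespan through its critical index in Johnson's order $[1],\dots,[n]$ as $C_{max}=\sum_{l\le r^*}p_{1,[l]}+\sum_{l\ge r^*}p_{2,[l]}=L_2+I_2$, the machine-2 idle time is $I_2=p_{1,[r^*]}+\sum_{l<r^*}(p_{1,[l]}-p_{2,[l]})$. When $[r^*]$ lies in the first Johnson group (where $p_1\le p_2$), every $l<r^*$ is also in that group, the summed differences are non-positive, and $I_2\le p_{1,[r^*]}=\min\{p_{1,[r^*]},p_{2,[r^*]}\}$; the remaining case is handled by the machine-order reversal symmetry of $F2||C_{max}$. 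Combined with $\min\{p_{1j},p_{2j}\}\le\frac12(p_{1j}+p_{2j})$, this gives the consequence I actually need: if every job obeys $p_{1j}+p_{2j}<\frac{2}{3}C_{max}$, then $C_{max}<L+\frac13 C_{max}$, hence $C_{max}\le\frac{3}{2}L$.

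I would then analyse the returned solution via a dichotomy on $P^*$ and the discard set $D$. If $P^*$ uses a job $J_{j_0}\in D$, then $J_{j_0}$ entered $D$ at an iteration whose makespan $\tilde C$ satisfied $p_{1j_0}+p_{2j_0}\ge\frac{2}{3}\tilde C$; since $J_{j_0}\in J^*$ forces $OPT\ge p_{1j_0}+p_{2j_0}$, we obtain $\tilde C\le\frac{3}{2}OPT$, and because the algorithm keeps the smallest makespan encountered, the returned value is $\le\frac{3}{2}OPT$. Otherwise $P^*$ avoids $D$ at every iteration (as $D$ only grows), so by Theorem \ref{th_minmax} the ABV step always returns a $D$-avoiding path of max-load $\le(1+\epsilon)OPT<M$; consequently the loop can terminate only through the ``no big job'' condition, and at termination every job of the final $J_P$ satisfies $p_{1j}+p_{2j}<\frac{2}{3}C'_{max}$ while $L\le(1+\epsilon)OPT$. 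Bounding the Jackson makespan by $\max\{C^1_{max},C^2_{max}\}$ through Lemma \ref{lemma_J2} and applying the flow-shop bound of the previous paragraph to the larger of the two forced orientations yields $C'_{max}\le\frac{3}{2}L\le\frac{3}{2}(1+\epsilon)OPT$.

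It then remains to record termination (each iteration adds at least the triggering big job to $D$, so there are at most $n$ iterations) and to absorb constants by running the ABV step with accuracy $\epsilon/2$, which delivers the claimed ratio $\frac{3}{2}+\epsilon$. I expect the main obstacle to be the flow-shop inequality $C_{max}\le L+\max_j\min\{p_{1j},p_{2j}\}$, in particular the case in which the critical job falls in the second Johnson group, which I would dispatch by the reversal argument; a secondary delicate point is verifying that when $P^*$ avoids $D$ the ABV guarantee genuinely forces the returned path to avoid $D$ and to keep its max-load within $(1+\epsilon)OPT$, so that the ``no big job'' termination is the only possibility.
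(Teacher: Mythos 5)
Your proposal is correct and follows essentially the same route as the paper: the same dichotomy on whether $J^*$ meets the discard set $D$ (with the lower bound $C^*_{max}\geq p_{1j}+p_{2j}$ in the first case and the ABV/contradiction-with-$M$ argument in the second), the same reduction via Lemma \ref{lemma_J2} to the two forced flow-shop orientations, and the same critical-job bound for Johnson's rule, which the paper states as $C^1_{max}\leq \sum_{J_j\in J'}p_{1j}+p_{2v}$ with $p_{2v}\leq\frac{1}{3}C'_{max}$ and you package as $C_{max}\leq L+\max_j\min\{p_{1j},p_{2j}\}$. Your write-up merely makes explicit the critical-index case analysis and the $\epsilon$-rescaling that the paper leaves implicit.
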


Finally, we study the general case $Jm|\mathrm{shortest}~\mathrm{path}|C_{max}$, where $m$ is fixed. By theorem \ref{th_SSW_SSS}, we know that there exists $\alpha > 0$, such that the SSW-SSS algorithm returns a schedule satisfies
\be
C'_{max} \leq\alpha\frac{\log^2(m\mu)}{\log{\log(m\mu)}}\left(\max_{i\in\{1,\cdots,m\}}{\sum_{J_j\in J'}\mu_{ij}p_{ij}}+ \max_{j\in J'}{\sum^{m}_{i=1}\mu_{ij}p_{ij}}\right).\label{eq:sjar_1}
\ee
The factor $\alpha$ is decided by choosing the probability of the randomized steps and the subsequent operations in the SSW-SSS algorithm \cite{Shmoys1994,Schmidt1995}, and its value can be obtained by complicated calculation. Assume we determine such value of $\alpha$. We can design an approximation algorithm with worst-case ratio $O\left(\frac{\log^2(m\mu)}{\log{\log(m\mu)}}\right)$ for $Jm|\mathrm{shortest}~\mathrm{path}|C_{max}$. We refer to this algorithm as the SAR algorithm, and describe it in Algorithm \ref{alg_sar}.

\begin{algorithm}[htb]
\caption{The SAR for $Jm|\mathrm{shortest}~\mathrm{path}|C_{max}$}
\label{alg_sar}
\begin{algorithmic}[1]
\STATE Set $Alg$ be the SSW-SSS algorithm for $Jm||C_{max}$ and $\rho = \frac{\log{\log(m\mu)}}{2\alpha\log^2(m\mu)}$.
\STATE Solve the problem by implementing UAR($Alg$, $\rho$, $m$).
\end{algorithmic}
\end{algorithm}

\begin{theorem}
The SAR algorithm is an $O\left(\frac{\log^2(m\mu)}{\log{\log(m\mu)}}\right)$-approximation algorithm for $Jm|\mathrm{shortest}~\mathrm{path}|C_{max}$.  \label{th_SAR}
\end{theorem}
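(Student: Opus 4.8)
The plan is to play the two terms of the SSW-SSS guarantee against the two standard lower bounds on the optimum. Write $J^*$ for the job set of an optimal path and $C_{max}^*$ for its makespan. Bound (\ref{eq_max}) gives $\max_i\sum_{J_j\in J^*}\mu_{ij}p_{ij}\le C_{max}^*$, and bound (\ref{eq_job}) gives $\sum_{i=1}^m\mu_{ij}p_{ij}\le C_{max}^*$ for every $J_j\in J^*$. The bound (\ref{eq:sjar_1}) of the SSW-SSS algorithm involves exactly these two quantities for the scheduled set $J_P$: its maximum machine load $\max_i\sum_{J_j\in J_P}\mu_{ij}p_{ij}$ and its maximum job length $\max_{J_j\in J_P}\sum_i\mu_{ij}p_{ij}$. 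The maximum machine load is precisely the min-max objective controlled by the ABV algorithm (Theorem \ref{th_minmax}) through the initial weights $w^i_j=\mu_{ij}p_{ij}$, while the parameter $\rho=\frac{\log\log(m\mu)}{2\alpha\log^2(m\mu)}$ is chosen so that any job length below the threshold $\rho C'_{max}$ contributes at most $\tfrac12 C'_{max}$ after multiplication by $\alpha\frac{\log^2(m\mu)}{\log\log(m\mu)}$, which is what lets me absorb the second term.

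First I would show that, as long as no job of $J^*$ has entered $D$, the path $P^*$ is still a candidate in the ABV call with all of its arc weights at their true values, so its min-max weight equals $\max_i\sum_{J_j\in J^*}\mu_{ij}p_{ij}\le C_{max}^*$. Since $M=(1+\epsilon)\sum_{J_j\in J}\sum_i\mu_{ij}p_{ij}+1$ strictly exceeds $(1+\epsilon)C_{max}^*$ by (\ref{eq_min}), the path returned by ABV has max-weight at most $(1+\epsilon)C_{max}^*<M$ and therefore contains no blocked job; consequently $\max_i\sum_{J_j\in J_P}\mu_{ij}p_{ij}\le(1+\epsilon)C_{max}^*$, and the loop cannot terminate through the test $J_P\cap D\neq\emptyset$. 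This splits the analysis into two terminal situations.

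In the first situation no job of $J^*$ is ever blocked, so the loop halts because the current $J_P$ has no job with $\sum_i\mu_{ij}p_{ij}\ge\rho C'_{max}$; then $\max_{J_j\in J_P}\sum_i\mu_{ij}p_{ij}<\rho C'_{max}$. Feeding this and $\max_i\sum_{J_j\in J_P}\mu_{ij}p_{ij}\le(1+\epsilon)C_{max}^*$ into (\ref{eq:sjar_1}) gives $C'_{max}\le\alpha\frac{\log^2(m\mu)}{\log\log(m\mu)}(1+\epsilon)C_{max}^*+\tfrac12 C'_{max}$, hence $C'_{max}\le 2\alpha(1+\epsilon)\frac{\log^2(m\mu)}{\log\log(m\mu)}C_{max}^*$, the claimed ratio. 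In the second situation some job of $J^*$ is blocked; let $t^*\ge1$ be the first iteration at which this occurs, $J_{j_0}\in J^*$ the blocked job, and $C^{\prime(t^*-1)}_{max}$ the makespan recorded in the preceding iteration. The blocking rule forces $\sum_i\mu_{ij_0}p_{ij_0}\ge\rho\,C^{\prime(t^*-1)}_{max}$, while $J_{j_0}\in J^*$ forces $\sum_i\mu_{ij_0}p_{ij_0}\le C_{max}^*$ by (\ref{eq_job}); hence $C^{\prime(t^*-1)}_{max}\le\rho^{-1}C_{max}^*=2\alpha\frac{\log^2(m\mu)}{\log\log(m\mu)}C_{max}^*$. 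Because the algorithm always retains the smallest makespan seen, the returned value is bounded by $C^{\prime(t^*-1)}_{max}$, giving the ratio again.

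I expect the delicate point to be the coupling handled in the second paragraph: proving that $P^*$ stays a legitimate ABV candidate until one of its \emph{own} jobs is blocked, and that $M$ is large enough to guarantee ABV never selects a blocked arc while such an unblocked candidate exists, so that the machine-load estimate $\max_i\sum_{J_j\in J_P}\mu_{ij}p_{ij}\le(1+\epsilon)C_{max}^*$ survives into the first situation. Once this invariant is established the case split is routine, and termination is clear since each iteration permanently adds at least one new job to $D$, bounding the number of iterations by $n$ and keeping the total running time polynomial for fixed $m$ and $\epsilon$.
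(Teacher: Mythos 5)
Your proposal is correct and follows essentially the same route as the paper: the same two-case split ($J^*\cap D\neq\emptyset$ handled via the threshold $\rho$ and bound (\ref{eq_job}), versus $J^*\cap D=\emptyset$ handled by showing the ABV-returned path avoids blocked arcs because its min-max weight is below $M$, then feeding the machine-load and job-length bounds into (\ref{eq:sjar_1}) and absorbing the $\tfrac12 C'_{max}$ term). The paper merely states these steps by reference to the GAR template; you have spelled them out explicitly, including the termination argument, with no gaps.
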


However, we remind that the SAR algorithm relies on the assumption, that we can determine the constant $\alpha$ for the SSW-SSS algorithm. We can calculate it by following the details of the SSW-SSS algorithm, and in fact we can choose $\alpha$ large enough to guarantee the performance ratio of our algorithm.

\subsection{A PTAS for $Om|\mathrm{shortest}~\mathrm{path}|C_{max}$}\label{sec_alg_1+e_o}
In the previous subsection, we introduced a $(2+\epsilon)$-approximation algorithm for $Om|\mathrm{shortest}~\mathrm{path}|C_{max}$ based on the UAR($Alg$, $\rho$, $m$) algorithm. By a different approach, we propose a $(1+\epsilon)$-approximation algorithm for any $\epsilon>0$, i.e. a PTAS. We also iteratively find feasible solutions, but guarantee that one of the returned solutions has the same first $N$-th largest jobs with an optimal solution where $N$ is a given constant. Precisely speaking, we say job $J_j$ is larger than job $J_k$ if $\max_{i\in \{1, \cdots, m\}}p_{ij}>\max_{i\in \{1, \cdots, m\}}p_{ik}$. To do this, we enumerate all size $N$ subsets $J^N$ of $J$, and then iteratively modify the weights of the graph such that the jobs larger than any job in $J^N$ will not be chosen. Then find a feasible solution which contains all the jobs in $J^N$ corresponding to the modified graph, i.e., the corresponding path is constrained to visit all the arcs corresponding to $J^N$ if such a path exists.

To find a feasible solution in each iteration, we adopt the modified ABV algorithm (see \ref{app_modABV}) to obtain a near optimal min-max shortest path among all the paths visiting the arcs corresponding to $J^N$ if such a path exists. Then we schedule the selected jobs by the PTAS for $Om||C_{max}$ \cite{Sevastianov1998} which is denoted as the SW algorithm. We refer to our algorithm as the SAE algorithm, and describe it in Algorithm \ref{alg_sae}.

\begin{algorithm}[htb]
\caption{The SAE algorithm for $Om|\mathrm{shortest}~\mathrm{path}|C_{max}$}
\label{alg_sae}
\begin{algorithmic}[1]
\STATE Given $0< \epsilon <1$, set $N = m\left(\frac{m(3+\epsilon)}{\epsilon}\right)^{2^{\frac{m(3+\epsilon)}{\epsilon}}}$.
\STATE Set $D:=\emptyset$, $M : = (1+\frac{\epsilon}{3})\sum_{J_j\in J}\sum_{i = 1}^{m}p_{ij} + 1$.
\STATE Initially, $(w^1_j, w^2_j, \cdots, w^m_j) := (p_{1j}, p_{2j}, \cdots, p_{mj})$, for $a_j\in A$ corresponding to $J_j$.
\FOR {all $J^N \subset J$, with $|J^N| = N$}\label{alg_sae_lfor}
\STATE $(w^1_j, w^2_j, \cdots, w^m_j) := (p_{1j}, p_{2j}, \cdots, p_{mj})$, $D:=\emptyset$.
\STATE For jobs $J_k \in J\setminus J^N$ with $\max_{i\in \{1, \cdots, m\}}p_{ik} > \min_{J_j\in J^N}\max_{i\in \{1, \cdots, m\}}p_{ij}$, set $(w^1_k, w^2_k, \cdots, w^m_k) := (M, M, \cdots, M)$, $D:= D\cup \{J_k\}$.
\STATE Implement the modified ABV algorithm to obtain a feasible path $P$ of $SP$ such that the returned path visits all the arcs corresponding to $J^N$ if such a path exists. Construct the corresponding job set as $J_P$.
\STATE Schedule the jobs of $J_P$ by the SW algorithm.
\IF{$C'_{max} < C_{max}$}
        \STATE $S : = J_P$, $\sigma: = \sigma'$, $C_{max}:=C'_{max}$.
    \ENDIF
\ENDFOR\label{alg_sae_lendfor}
\RETURN $S$, $\sigma$, $C_{max}$.
\end{algorithmic}
\end{algorithm}

There are ${n \choose N}$ distinct subsets $J^N$, thus the iterations between line \ref{alg_sae_lfor} - line \ref{alg_sae_lendfor} run at most $O(n^N)$ times, that is a polynomial of $n$ since $N$ is a constant when $m$ and $\epsilon$ are fixed. Since the modified ABV algorithm is a FPTAS and the SW algorithm is a PTAS, the running time of each iteration is also bounded by the polynomial of $n$ if $m$ and $\epsilon$ are fixed. It suffices to show that the SAE algorithm terminates in polynomial time. The following theorem indicates the SAE algorithm is a PTAS, and detailed proof can be found in \ref{app_SAE_proof}.

\begin{theorem}
The SAE algorithm is a PTAS for $Jm|\mathrm{shortest}~\mathrm{path}|C_{max}$.\label{th_SAE}
\end{theorem}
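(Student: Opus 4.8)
The plan is to show that among the $\binom{n}{N}$ iterations of the SAE algorithm, at least one produces a schedule of makespan at most $(1+\epsilon)\,OPT$, where $OPT$ is the optimal makespan of $Jm|\mathrm{shortest}~\mathrm{path}|C_{max}$; since the algorithm returns the best schedule found over all iterations and (as already argued in the text) runs in $O(n^N)$ iterations each polynomial in $n$ for fixed $m,\epsilon$, this establishes the PTAS. Let $P^*$ be an optimal $s$--$t$ path, $J^*$ its job set, and order the jobs of $J^*$ by the key $\max_i p_{ij}$. Write $J^N_*$ for the $N$ largest jobs of $J^*$ under this key. The entire argument is carried out in the single iteration for which $J^N=J^N_*$.

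First I would check that in this iteration the optimal path survives the weight modification. By the choice of $J^N_*$, every job of $J^*$ outside $J^N_*$ satisfies $\max_i p_{ij}\le \min_{J_j\in J^N_*}\max_i p_{ij}$, so no job of $J^*$ is placed into the forbidden set $D$; hence $P^*$ is a feasible path visiting all arcs of $J^N_*$. Consequently the modified ABV algorithm (\ref{app_modABV}), run with accuracy $\tfrac{\epsilon}{3}$ on the weights $w^i_j=p_{ij}$, returns a path $P$ that also visits all arcs of $J^N_*$ and satisfies $\max_i\sum_{a_j\in P}p_{ij}\le(1+\tfrac{\epsilon}{3})\max_i\sum_{a_j\in P^*}p_{ij}$. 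By the load bound (\ref{eq_max}) applied to the optimal schedule, $\max_i\sum_{a_j\in P^*}p_{ij}\le OPT$, so every machine load of the selected set $J_P$ is at most $(1+\tfrac{\epsilon}{3})OPT$. Moreover $J_P\supseteq J^N_*$ and every job in $J_P\setminus J^N_*$ is no larger in the key than any job of $J^N_*$, so the largest jobs of $J_P$ all lie in $J^N_*\subseteq J^*$ and hence each has total length at most $OPT$ by (\ref{eq_job}).

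Next I would bound the makespan produced by the SW algorithm on $J_P$. By Theorem~\ref{th_o_sw} the SW algorithm schedules $J_P$ with makespan at most $(1+\tfrac{\epsilon}{3})$ times the optimal makespan for the selected jobs, so it suffices to show that scheduling $J_P$ is only an $(1+O(\epsilon))$ factor worse than $OPT$. This is where the constant $N$ enters: it is chosen (via the double-exponential formula of line~1) to exceed the number of jobs that the SW classification can ever declare ``large'' for an instance whose machine loads are at most $(1+\tfrac{\epsilon}{3})OPT$. In that regime the SW-large jobs of $J_P$ are forced to lie inside $J^N_*$, i.e. they coincide with the large jobs of the optimal instance, while all jobs of $J_P\setminus J^N_*$ are small and, by (\ref{eq_max})--(\ref{eq_min}), contribute only a controllable amount of idle time when packed into the gaps of the large-job arrangement. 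Combining the load bound $(1+\tfrac{\epsilon}{3})OPT$, the job-length bound $OPT$, and the SW small-job packing guarantee, and collecting the three $\tfrac{\epsilon}{3}$ error terms, I would conclude that the SW schedule of $J_P$ has makespan at most $(1+\epsilon)OPT$.

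The main obstacle is this last step: making precise the claim that $J_P$ and $J^*$ share the same large jobs and that replacing the small jobs costs at most an $\epsilon$-fraction. This requires dovetailing the SAE large/small threshold (defined through the key $\max_i p_{ij}$ and the forbidden set $D$) with the SW PTAS's internal large/small classification (defined through a threshold on operation sizes relative to the machine load), and verifying that the specific value of $N$ indeed dominates the SW large-job count for every load level up to $(1+\tfrac{\epsilon}{3})OPT$; this is the sole place where the enumeration size is consumed. Two routine points remain: the case $|J^*|<N$, which I would handle by additionally ranging over all subsets of size at most $N$ (only polynomially many more iterations, since $N$ is constant) so that the same argument applies with $J^N=J^*$; and the running-time accounting, already noted in the text. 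The full details are deferred to \ref{app_SAE_proof}.
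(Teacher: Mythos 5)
Your outline reproduces the paper's strategy exactly---fix the iteration in which $J^N$ is the set of the $N$ largest jobs of $J^*$, use the modified ABV algorithm to get machine loads $P'_{max}\leq(1+\frac{\epsilon}{3})C^*_{max}$, and observe that the SW-large jobs of the selected set must lie in $J^N\subseteq J^*$---but the step you yourself label ``the main obstacle'' is precisely the substance of the paper's proof, and you leave it unexecuted (deferring the ``full details'' to \ref{app_SAE_proof} is not available in a blind attempt: that appendix \emph{is} the proof you were asked to supply). Two concrete pieces are missing. First, invoking Theorem \ref{th_o_sw} as a black box compares SW's output to the optimum of the \emph{selected} instance $J_P$, which is not a priori within $(1+O(\epsilon))$ of $C^*_{max}$; an open shop optimum can exceed the maximum machine load, so the load bound alone does not close the loop. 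The paper instead opens up SW: it uses that SW schedules the large jobs $J'_L=\{J_j : p_{ij}\geq\alpha P'_{max} \text{ for some } i\}$ optimally, and since $|J'_L|\leq m/\alpha\leq N$ forces $J'_L\subseteq J^N\subseteq J^*$, the large-job makespan satisfies $C^L_{max}\leq C^*_{max}$ by (\ref{eq_sae_jn<C*}). Second, your claim that the small jobs ``contribute only a controllable amount of idle time'' is exactly what must be proved, and it requires (i) the pigeonhole selection of the threshold $\alpha$ among $\alpha_k=\left(\frac{\epsilon}{m(3+\epsilon)}\right)^{2^k}$ so that the medium operations satisfy $\sum_{O_{ij}\in O^1_S}p_{ij}\leq\frac{\epsilon}{3+\epsilon}P'_{max}$---this is the sole source of the doubly exponential value of $N$, which you gesture at but never verify---and (ii) the case analysis on the last-finishing machine ($P_t=0$; then $|O'|\leq m-1$, giving $P_t\leq\frac{2}{3}\epsilon C^*_{max}$; then $|O'|>m-1$, giving $T_0\leq\frac{m}{\alpha}\alpha^2P'_{max}\leq\frac{\epsilon}{3}C^*_{max}$), combined with $T_t\leq(m-1)\alpha P'_{max}\leq\frac{\epsilon}{3}C^*_{max}$. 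Without these estimates the claimed $(1+\epsilon)$ bound is a statement of intent, not a proof.

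Two smaller observations. Your assertion that every job of $J_P\setminus J^N$ has key at most $\min_{J_j\in J^N}\max_i p_{ij}$ presupposes $J_P\cap D=\emptyset$, which itself needs the argument (explicit in the paper, borrowed from the UAR analyses) that a path through a $D$-arc would have weight at least $M>(1+\frac{\epsilon}{3})\max_i\sum_{a_j\in P^*}p_{ij}$, contradicting the ABV guarantee; you state the conclusion without this step. On the credit side, your handling of the case $|J^*|<N$ by enumerating all subsets of size \emph{at most} $N$ is a legitimate repair of a case the paper's proof silently skips, and your implicit reading of the theorem's ``$Jm$'' as a typo for ``$Om$'' (the section title, the SW algorithm, and the entire analysis concern open shop) is correct.
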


\section{Conclusions}\label{sec_end}
This paper studies several problems combining two well-known combinatorial optimization problems. We show the hardness of the problems, and present
some approximation algorithms. It is interesting to find approximation algorithms with better worst-case ratios for $J2|op\leq 2, \mathrm{shortest}~\mathrm{path}|C_{max}$ and $Jm|\mathrm{shortest}~\mathrm{path}|C_{max}$. Moreover, it needs further study to close the gap between the 2-inapproximability results and the $m$-approximation algorithms for $O|\mathrm{shortest}~\mathrm{path}|C_{max}$ and $J|\mathrm{shortest}~\mathrm{path}|C_{max}$. One can also consider other interesting combination of combinatorial optimization problems.

\section*{Acknowledgments}
This work has been supported by the Bilateral Scientific Cooperation Project BIL10/10 between Tsinghua University and KU Leuven.

\bibliographystyle{plain}
\bibliographystyle{splncs}

\newpage
\appendix
\section{The Modified ABV Algorithm}\label{app_modABV}
In this appendix, we give a modified version of the ABV algorithm \cite{ABV06}. The algorithm can determine a near optimal min-max shortest path among all the paths which are constrained to visit all arcs in a arc set $A' \subset A$ ($|A'|=N$ is a constant) if such a path exists. We propose a dynamic programming to solve this problem in pseudo-polynomial time.

We index the vertex set $V$ as $\{0, 1, \cdots, t\}$, where $0$ is the starting point and $t$ is the destination. We denote $A' = \{a_1, a_2, \cdots, a_N\}$. For $u,v \in\{ 0,1, \cdots, t\}$, let $S^u_v$ be a set of $(K + N)$-dimensional vectors, where $K$ is the number of weights of each arc. A vector $s(P) \in S^u_v$ with respect to a $0-v$ path $P$ with at most $u$ arcs, satisfies that its first $K$ components are the lengths of $P$ from $0$ to $v$ with respect to different weights respectively, and each of the last $N$ components is associated with an arc in $A'$ such that the $(K + i)$-th component is 1 if $P$ visits $a_i$, and is 0 otherwise.

Notice that each of the first $K$ components of $s(P)$ can be bounded by $W = \max_{k \in \{1, \cdots, K\}}\sum_{j = 1}^{n}{w^k_{j}}$ and the other $N$ components are binary, and thus the size (numbers of distinct vectors) of $S^u_v$ is no more than $W^K2^N = O(W^K)$. Let $s_j^0$ be a $(K + N)$-dimensional vector that is obtained by adding $N$ zeros after $(w_j^1,\cdots,w_j^K)$, and $s_j^1$ is the same with $s_j^0$ except its $(K + j)$-th component is 1. The dynamic programming is described in Algorithm \ref{alg_modABV}.

\begin{algorithm}[htb]
\caption{The Modified ABV Algorithm}
\label{alg_modABV}
\begin{algorithmic}[1]
\STATE $W = \max_{k \in \{1, \cdots, K\}}\sum_{j = 1}^{n}{w^k_{j}}$.
\STATE $S^0_0 : = \{(0, 0, \cdots, 0)\}$, $S^u_v:=\emptyset$ for $u,v \in\{ 0,1, \cdots, t\}$.
\FOR {$u = 1, \cdots, t$}
    \FOR {$v = 1, \cdots, t$}
        \FOR {each $v'$ with $a_j = (v',v) \in A$}
            \FOR {each vector $s(P)=(a^1, \cdots, a^K, a^{K + 1}, \cdots, a^{K + N}) \in S^{u-1}_{v'}$}
            \IF {$\forall i$, $a^i + w^i_j \leq W$}
                \IF {$a_j \in A'$}
                 \STATE $S^u_v:= S^u_v \cup \{s(P)+s_j^1\}$.
                 \ELSE
                 \STATE $S^u_v:= S^u_v \cup \{s(P)+s_j^0\}$.
                \ENDIF
            \ENDIF
            \ENDFOR
        \ENDFOR
    \ENDFOR
\ENDFOR
\RETURN the path $P$ corresponding to a vector $(a^1, \cdots, a^K, 1, \cdots, 1) \in S^t_t$ such that $\max_{i=1, \cdots, K}a^i$ is the minimum among all such vectors if such a vector exists, and otherwise return an empty set.
\end{algorithmic}
\end{algorithm}

It is not difficult to see that Algorithm \ref{alg_modABV} returns an optimal solution in time $O(|V|^2|A|W^K)$, which is a pseudo-polynomial time algorithm. Based on the scaling technique (for example, see\cite{ABV06}), we can use Algorithm \ref{alg_modABV} to derive a FPTAS, such that given $\epsilon >0$, it returns a path with the value at most $(1+\epsilon) \min_P\max_{k=1, \cdots, K}\sum _{a_j\in P}w^k_j$ among all the paths $P$ visiting arc set $A'$ if such a path exists.

\section{The Performance Analysis of Algorithms in Section \ref{sec_alg_im}}\label{app_proof}
We point out that the proofs of the worst-case performance of algorithms based on UAR($Alg$, $\rho$, $m$) are quite similar. We give the detailed proof for the GAR algorithm, and describe the key ideas and main steps for the other results since they can be obtained by analogous arguments.

\begin{proof}[Theorem \ref{th_GAR}]
Let $C_{max}$ and $S$ be the makespan and the job set returned by the GAR algorithm respectively, and $C^*_{max}$ and $J^*$ the value of an optimal solution. We consider two cases.
\setcounter{case}{0}
\begin{case} $J^*\cap D \neq \emptyset$

It implies that there is at least one job in the optimal solution, say $J_j$, such that $p_{1j} + p_{2j} \geq C'_{max}$ holds for a current schedule with makespan $C'_{max}$ during the execution. Notice that the schedule returned by the GAR algorithm is the best one among all current schedules, i.e. $C_{max} \leq C'_{max}$. It follows from (\ref{eq_job}) that
\be
C_{max}  \leq C'_{max} \leq p_{1j} + p_{2j} \leq C^*_{max}.
\ee
That is to say the GAR algorithm will return an optimal solution for this case.
\end{case}
\begin{case}
$J^*\cap D = \emptyset$

Consider the last current schedule during the execution of the GAR algorithm. We denote the corresponding job set and the makespan as $J'$ and $C'_{max}$ respectively.

In this case, we first argue that $J' \cap D = \emptyset$. Suppose that this is not the case, since $J^*\cap D = \emptyset$, the weights of arcs corresponding to the jobs in $J^*$ have not been revised. Hence we have $ (1+\epsilon)\max \left\{\sum_{J_j\in J^*}w^1_j, \sum_{J_j\in J^*}w^2_j\right\}< M$. Moreover, by the assumption $J' \cap D \neq \emptyset$, we have $\max \left\{\sum_{J_j\in J'}w^1_j, \sum_{J_j\in J'}w^2_j\right\} \geq M$. By Theorem \ref{th_minmax}, the solution returned by the ABV algorithm satisfies
\begin{equation*}
M \leq \max \left\{\sum_{J_j\in J'}w^1_j, \sum_{J_j\in J'}w^2_j\right\} \leq (1+\epsilon)\max \left\{\sum_{J_j\in J^*}w^1_j, \sum_{J_j\in J^*}w^2_j\right\} < M,
\end{equation*} which leads to a contradiction.

Notice that each job in the last current schedule satisfies $p_{1j} + p_{2j} < C'_{max}$, since otherwise the algorithm will continue. Therefore, by Theorem \ref{th_o_gs}, the schedule returned by the GS algorithm satisfies
\be
C'_{max} = \max\left\{\sum_{J_j\in J'}p_{1j}, \sum_{J_j\in J'}p_{2j}\right\},\label{eq_go2ar_1}
\ee

Since $J' \cap D = \emptyset$, we know that all jobs $J_j\in J'$ have not been revised. Thus,
it follows from (\ref{eq_max}), (\ref{eq_job}), (\ref{eq_go2ar_1}), Theorem~\ref{th_minmax} and the fact that the schedule returned by the GAR algorithm is the best one among all current schedules, we have
\begin{equation*}\label{eq_om}
\begin{split}
C_{max} \leq C'_{max} &  = \max\left\{\sum_{J_j\in J'}p_{1j}, \sum_{j\in J'}p_{2j}\right\} \\
& = \max\left\{\sum_{J_j\in J'}w^1_{j}, \sum_{J_j\in J'}w^2_{j}\right\}\\
& \leq (1+\epsilon)\max \left\{\sum_{J_j\in J^*}w^1_j, \sum_{J_j\in J^*}w^2_{j}\right\}\\
&\leq(1+\epsilon)C^*_{max}.
\end{split}
\end{equation*}
\end{case}

We have claimed that the algorithms based on UAR($Alg$, $\rho$, $m$) are polynomial-time algorithms. Moreover, notice that the ABV algorithm is a FPTAS \cite{ABV06}, the GS algorithm runs in linear time, and the GAR algorithm implements the ABV algorithms and the GS algorithm at most $n$ times, and we can claim that the GAR algorithm is a FPTAS for $O2|\mathrm{shortest}~\mathrm{path}|C_{max}$.
\qed
\end{proof}

In the following proofs of Theorems \ref{th_RAR}, \ref{th_jjar} and \ref{th_SAR}, we adopt the same notations as in the proof of Theorem \ref{th_GAR}, and also analyze the same two cases. Then we give the proof of the RAR algorithm for $Om|\mathrm{shortest}~\mathrm{path}|C_{max}$.

\begin{proof}[Theorem \ref{th_RAR}]
The argument for the first case is similar to that of Theorem \ref{th_GAR} by noticing that there is at least one job with $\sum_{i=1}^mp_{ij} > \frac{1}{2} C'_{max}$ in both the optimal schedule and one current schedule, and it follows that $C_{max}\leq C'_{max} \leq 2C^*_{max}$.

For the second case, notice that by Theorem \ref{th_o_racsmany}, the makespan of the last current schedule returned by R{\'a}csm{\'a}ny algorithm satisfies $C'_{max} \leq \sum_{J_j\in J'}p_{lj} + \sum_{i = 1}^m p_{ik}$, where $J_k$ is the last completed job and processed on $M_l$. Moreover, each jobs in $J'$ satisfies $\sum_{i=1}^mp_{ij} \leq \frac{1}{2} C'_{max}$, as otherwise the algorithm will continue. By Theorem \ref{th_minmax} and a similar argument as in Theorem \ref{th_GAR}, it is not difficult to show that $C_{max}\leq C'_{max} \leq (2+\epsilon) C^*_{max}$.
\qed
\end{proof}

Before analyzing the performance of $J2|op\leq 2, \mathrm{shortest}~\mathrm{path}|C_{max}$, we first prove Lemma \ref{lemma_J2}.
\begin{proof}[Lemma \ref{lemma_J2}]
Denote $J_{12}$ ($J_{21}$) as the set of jobs with processing order $(1\rightarrow2)$ ($(2\rightarrow1)$) in the original job set. In the schedule returned by Jackson's rule for $J2|op \leq 2|C_{max}$, let $C^J_{max}$ be the makespan, and suppose that the total processing time of $J_{12}$ on $M_1$ is not less than that of $J_{21}$ on $M_2$. By  Jackson's rule, jobs in $J_{21}$ are scheduled after jobs in $J_{12}$ consecutively on $M_1$, therefore no idle occurs on $M_1$. We consider the following cases.
\setcounter{case}{0}
\begin{case}$C^J_{max} = \sum_{J_j \in J}p_{1j}$

It follows from (\ref{eq_max}) that $C^J_{max}\leq \max\{C^{1}_{max}, C^2_{max}\}$.
\end{case}

\begin{case}$C^J_{max} > \sum_{J_j \in J}p_{1j}$

\begin{subcase}{no idle occurs on $M_2$}

In this case, the processes on both machines are consecutive, it is straightforward to show that $C^J_{max} = \sum_{J_j \in J}p_{2j} \leq \max\{C^{1}_{max}, C^2_{max}\}.$
\end{subcase}
\begin{subcase}{idle occurs on $M_2$}

Remember that Jackson's rule first schedules jobs in $J_{12}$ and $J_{21}$ by Johnson's rule respectively, denoted the two schedules as $\sigma_1$ and $\sigma_2$, and then combines the two schedules. Since Johnson's rule returns a permutation schedule, we can denote $\sigma_1 = \{1, \cdots, l\}$ and $J_{12}=\{J_1, \cdots, J_l\}$. Consider the job in $J_{12}$, say $J_k$, which starts the processing on $M_2$ after the last idle on that machine. It is easy to see that $J_k$ starts processing on $M_2$ immediately after its completion on $M_1$. Recall that no idle occurs on $M_1$ by assumption, thus we have $C^J_{max} = \sum^k_{j = 1} p_{1j} + \sum^l_{j = k} p_{2j}$. Notice that $\sigma_1$ is obtained by Johnson's rule. We change all jobs' processing order as $(1\rightarrow2)$ and obtain a schedule by applying Johnson's rule to the revised jobs. Let the makespan of this schedule be $C^1_{max}$. Since this schedule is also obtained by Johnson's rule, we know $J_1, \cdots, J_{k-1}$ are also scheduled before $J_{k}$, whereas $J_{k+1}, \cdots, J_l$ are scheduled after $J_k$ in this schedule. Therefore, we have $C^1_{max} \geq \sum^k_{j = 1} p_{1j} + \sum^l_{j = k} p_{2j}$, and it suffices to show that $C^J_{max}\leq C^1_{max}$.
\end{subcase}
\end{case}

For the case where total processing time of $J_{12}$ on $M_1$ is less than that of $J_{21}$ on $M_2$, an analogous argument also yields $C^J_{max}\leq \max\{C^{1}_{max}, C^2_{max}\}$.
\qed
\end{proof}

Now we can study the performance of the JJAR algorithm for $J2|op\leq 2, \mathrm{shortest}~\mathrm{path}|C_{max}$.
\begin{proof}[Theorem \ref{th_jjar}]
The first case is similar to that of Theorem \ref{th_GAR} by noticing that there is at least one job with $\sum_{i=1}^mp_{ij} > \frac{2}{3} C'_{max}$ in both the optimal schedule and one current schedule, it follows that $C_{max}\leq C'_{max} \leq \frac{3}{2}C^*_{max}$.

For the second case, first by lemma \ref{lemma_J2} we have $C'_{max} \leq \max\{C^1_{max}, C^2_{max}\}$, where $C'_{max}$ is the makespan of the last current schedule, and $C^1_{max}$ ($C^2_{max}$) is the makespan of the schedule obtained by changing the processing order of all jobs of $J'$ to be $(1\rightarrow2)$ $((2\rightarrow1))$ and applying Johnson's rule. Assume that $C'_{max} \leq C^1_{max}$, and denote $J_v$ as the critical job of the schedule with respect to $C^1_{max}$. If $p_{1v}\geq p_{2v}$, we have $C^1_{max} \leq \sum_{J_j\in J'} p_{1j} +  p_{2v}$. Notice that all jobs in the last current schedule satisfy $p_{1j} + p_{2j} \leq \frac{2}{3}C'_{max}$ in the JJAR algorithm, and we have $p_{2v} \leq \frac{1}{3}C'_{max}$. A similar argument as in Theorem \ref{th_GAR} shows that $C_{max}\leq C'_{max} \leq C^1_{max} \leq (\frac{3}{2}+\epsilon)C^*_{max}$. The other situations can be verified by analogous arguments. Thus, the JJAR algorithm is $(\frac{3}{2}+\epsilon)$-approximate.
\qed
\end{proof}

Finally, we give the proof of the SAR algorithm for $Jm|\mathrm{shortest}~\mathrm{path}|C_{max}$.
\begin{proof}[Theorem \ref{th_SAR}]
The first case is analogous, and we can show that $C_{max}\leq \frac{2\alpha\log^2(m\mu)}{\log{\log(m\mu)}}C^*_{max}$.

For the second case, all the jobs in $J'$ satisfy $\sum_{i=1}^m\mu_{ij}p_{ij} \leq \frac{\log{\log(m\mu)}}{2\alpha\log^2(m\mu)} C'_{max}$, as otherwise the algorithm will continue. Combining (\ref{eq:sjar_1}) and Theorem \ref{th_minmax}, by a similar argument as in Theorem \ref{th_GAR}, it is not difficult to show that $C_{max} \leq C'_{max} \leq (1+\epsilon)\frac{2\alpha\log^2(m\mu)}{\log{\log(m\mu)}} C^*_{max}$. Thus, there exists an $O(\log^2(m\mu)/\log{\log(m\mu)})$-approximation algorithm for this problem.
\qed
\end{proof}

\section{The Proof of Theorem \ref{th_SAE} in Section \ref{sec_alg_1+e_o}}\label{app_SAE_proof}
This appendix analyzes the performance of the SAE algorithm.
\begin{proof}[Theorem \ref{th_SAE}]
Remember that we have assumed $N$ is a constant.

Consider the iteration that the subset $J^N$ is exactly the first $N$-th largest jobs of $J^*$, and denote the makespan and the job set returned in this iteration as $C'_{max}$ and $J'$ respectively.

We now argue that the jobs in $J^N$ are also the first $N$-th largest jobs of $J'$. First, the modified ABV algorithm returns a path visiting the arcs corresponding to $J^N$ if such a path exists. Since $J^N$ is exactly the first $N$-th largest jobs of $J^*$, we have $J^*\cap D = \emptyset$ and $J'\cap D = \emptyset$ following the analogous arguments in the proof of the algorithms based on UAR($Alg$, $\rho$, $m$). Therefore $J^N$ is exactly the set of first $N$-th largest jobs of $J'$. Notice that $C_{max}$ is the best one among all current schedules, and we have $C_{max}\leq C'_{max}$, so we only concern about $C'_{max}$ and the schedule returned in that iteration in the subsequent analysis.

Denote $P'_{max} = \max_{i\in \{1, \cdots, m\}}\sum_{J_j\in J'}p_{ij}$. In the SAE algorithm, given $\epsilon > 0$, we can use the modified ABV algorithm to return a path satisfying
\be
P'_{max} \leq \left(1 + \frac{\epsilon}{3}\right) \max_{i\in \{1, \cdots, m\}}\sum_{a_j\in J^*}p_{ij},
\ee
thus from (\ref{eq_max}) we have,
\be
P'_{max}\leq \left(1 + \frac{\epsilon}{3}\right)C^*_{max}. \label{eq_SAE_abv}
\ee

Now we study the schedule returned by the SW algorithm. Recall that jobs are divided into large jobs and small jobs\cite{Sevastianov1998}:
\begin{equation*}
\begin{split}
J'_L =& \{J_j\in J'| p_{ij}\geq \alpha P'_{max}, \mathrm{~for~some~}i, 1\leq i\leq m\},\\
J'_S =& \{J_j\in J'\setminus J'_L \}.
\end{split}
\end{equation*}
Furthermore, the operations of jobs in $J'_S$ are divided into two sets:
\begin{equation*}
\begin{split}
O^1_S =& \{O_{ij}| \alpha^2 P'_{max} < p_{ij} \leq \alpha P'_{max},~~J_j\in J'_S\},\\
O^2_S =& \{O_{ij}| p_{ij}\leq \alpha^2P'_{max},~~J_j\in J'_S\}.
\end{split}
\end{equation*}

The value of $\alpha$ is determined by the inequalities
\be
\left(\frac{\epsilon}{m(3+\epsilon)}\right)^{2^{\frac{m(3+\epsilon)}{\epsilon}}} < \alpha \leq \frac{\epsilon}{m(3+\epsilon)}. \label{eq_SAE_alpha2}
\ee
and
\be
\sum_{O_{ij}\in O^1_S} p_{ij} \leq \frac{\epsilon}{3 + \epsilon} P'_{max}. \label{eq_SAE_alpha1}
\ee
We show that such $\alpha$ exists and can be found in polynomial time. Denote $\alpha_k = \left(\frac{\epsilon}{m(3+\epsilon)}\right)^{2^{k}}$, $k = 0, 1, \cdots, \frac{m(3+\epsilon)}{\epsilon}-1$, and $O^1_S(\alpha_k)$ as the operations of $J'_S$ by setting $\alpha = \alpha_k$. Thus we have $\frac{m(3+\epsilon)}{\epsilon}$ disjoint operation sets $\{O^1_S(\alpha_k)\}$. Notice that the total processing time of all the operations is at most $mP'_{max}$, thus there must be at least one $O^1_S(\alpha_k)$ satisfying (\ref{eq_SAE_alpha1}), then we set $\alpha$ be such $\alpha_k$. Such $\alpha$ can be found in constant time for fixed $m$ and $\epsilon$.

Notice that the number of large jobs satisfy $|J'_L| \leq \frac{m}{\alpha} \leq m\left(\frac{m(3+\epsilon)}{\epsilon}\right)^{2^{\frac{m(3+\epsilon)}{\epsilon}}}  = N$. Consequently, all the large jobs of $J'_L$ belongs to the job set $J^N$, and thus belongs to $J^*$. Moreover, notice that the SW algorithm first find an optimal schedule of $J'_L$ by trying any possible order, denote the makespan of this schedule as $C^L_{max}$. Recall that all jobs in $J^N$ also belong to $J^*$, therefore we have
\be
C^L_{max} \leq C^*_{max}. \label{eq_sae_jn<C*}
\ee

The remaining analysis is based on the SW algorithm \cite{Sevastianov1998}.

Let $M_l$ be the last completed machine, and $t$ be the completion time of the last operation of jobs in $J'_L$ on $M_l$. $T_0$ and $T_t$ are referred to the idle time on $M_l$ during time intervals $[0, t]$ and $[t, C'_{max}]$ respectively. The total processing time of all operations on $M_l$ after time $t$ is denoted by $P_t$. The worst worst-case ratio is shown by considering several cases as follows.

\setcounter{case}{0}
\begin{case}{$P_t = 0$}

It implies that $C'_{max} = t \leq C^L_{max}$. By ($\ref{eq_sae_jn<C*}$), the algorithm returns an optimal schedule.
\end{case}

\begin{case}{$P_t \neq 0$}

Consider the last operation $O_{lj}$ on $M_l$. By assumption, job $J_j$ belongs to $J'_S$. Since $J_j$ cannot be processed on any other idles on $M_l$ after time $t$, at these idles there must be some machine processing $J_j$. By combining (\ref{eq_SAE_abv}) and (\ref{eq_SAE_alpha2}), we have
\be
T_t \leq \sum_{i \neq l}p_{ij} \leq (m-1)\alpha P'_{max} \leq \frac{\epsilon}{3 + \epsilon}P'_{max}\leq \frac{\epsilon}{3}C^*_{max}.\label{eq_sae_t}
\ee
Let $O'$ be the set of $O^2_S$ processed on $M_l$ after $t$. We consider the following subcases.

\begin{subcase}{$|O'|\leq m-1$}

By (\ref{eq_SAE_alpha2}), (\ref{eq_SAE_alpha1}) and (\ref{eq_SAE_abv}), the total processing time on $M_l$ after time $t$ (the jobs are all in $J'_S$) satisfies
\be
P_t \leq \sum_{O_{ij}\in O^1_S} p_{ij} + \sum_{O_{ij}\in O'} p_{ij} \leq \frac{\epsilon}{3 + \epsilon}P'_{max} + (m-1)\alpha^2P'_{max} \leq \frac{2}{3}\epsilon C^*_{max}. \label{eq_SAE_c2}
\ee

Thus, it follows from (\ref{eq_sae_jn<C*}), (\ref{eq_sae_t}) and (\ref{eq_SAE_c2}) that
\be
\begin{split}
C_{max}  \leq C'_{max}&  = t + P_t + T_t \leq C^L_{max} + P_t + T_t\\
&\leq C^*_{max} + \frac{2}{3}\epsilon C^*_{max} + \frac{\epsilon}{3}C^*_{max} = (1 + \epsilon)C^*_{max}.
\end{split}
\ee
\end{subcase}

\begin{subcase}{$|O'|> m-1$}

Notice that there are at most $m/\alpha$ idles (large jobs) before $t$ on machine $M_l$ in the schedule with respect to $C'_{max}$. On each idle, since there are at most $m-1$ jobs being processed on the other machines and $|O'|> m-1$, the idle must smaller than $\alpha^2 P'_{max}$, since otherwise some job will be scheduled on that idle. Therefore we have
\be
T_0 \leq \frac{m}{\alpha}\alpha^2P'_{max} \leq \frac{\epsilon}{3 + \epsilon}P'_{max} \leq \frac{\epsilon}{3}C^*_{max}. \label{eq_SAE_c1}
\ee
Since $C'_{max}$ can be written as $C'_{max} = \sum_{j\in J'}p_{lj} + T_0 + T_t$, it follows from (\ref{eq_SAE_abv}), (\ref{eq_sae_t}) and (\ref{eq_SAE_c1}) that
\be
\begin{split}
C_{max}  \leq C'_{max}&  = \sum_{j\in J'}p_{lj} + T_0 + T_t \leq P'_{max} + T_0 + T_t\\
&\leq (1 + \frac{\epsilon}{3})C^*_{max} + \frac{2}{3}\epsilon C^*_{max} = (1 + \epsilon)C^*_{max}.
\end{split}
\ee

\end{subcase}
\end{case}
In conclusion, the SAE algorithm produces a schedule with makespan at most $(1 + \epsilon)C^*_{max}$.
\qed
\end{proof}
\end{document}